\newcommand {\ie} {{\em i.e., }}
\newcommand {\eg} {{\em e.g., }}
\newcommand {\beq} {\begin{equation}}
\newcommand {\eeq} {\end{equation}}
\newcommand {\bequn} {\begin{equation*}}
\newcommand {\eequn} {\end{equation*}}
\newcommand {\bear} {\begin{eqnarray}}
\newcommand {\eear} {\end{eqnarray}}
\DeclareMathOperator{\Var}{Var}
\newcommand {\Eqref}[1]{Eq.~(\ref{#1})}
\newcommand{\fig}[1]{Figure~\ref{#1}}
\newcommand{\figs}[2]{Figures~\ref{#1} and~\ref{#2}}
\newcommand{\figsm}[2]{Figures~\ref{#1} to~\ref{#2}}
\def\Sys{Sunstar\xspace}  
\title{\Sys: A Cost-effective Multi-Server Solution\\ for Reliable Video Delivery}
\author{Behnaz Arzani*, Nicholas Iodice*, Steven Hwang*, Prahalad Venkataramanan*, Roch Geurin$\dagger$, Boon Thau Loo*\\
*University of Pennsylvania, $\dagger$ University of Washington at St. Louis}
\newcommand{\Section}[1]{\vspace{-0.75mm}\section{#1}\vspace{-0.75mm}}
\newcommand{\Subsection}[1]{\vspace{-0.75mm}\subsection{#1}\vspace{-0.75mm}}
\newcommand{\eat}[1]{}
\author{
\alignauthor
       Behnaz Arzani \\
       \affaddr{University of Pennsylvania}\\
       \email{barzani@seas.upenn.edu}
\alignauthor
       Roch Guerin \\
       \affaddr{WUSTL}\\
       \email{guerin@wustl.edu}
\alignauthor Boon Thau Loo \\
       \affaddr{University of Pennsylvania}\\
       \email{boonloo@cis.upenn.edu}
}
\author[1]{Behnaz Arzani\thanks{barzani@seas.upenn.edu}}
\author[2]{Roch Guerin\thanks{guerin@wustl.edu}}
\author[1]{Boon Thau Loo\thanks{boonloo@cis.upenn.edu}}
\affil[1]{University Of Pennsylvania}
\affil[2]{University of Washington at St. Louis}
\newcommand{\TODO}[1]{{\color{Navy}((#1))}}
\theoremstyle{plain}
\newtheorem{theorem}{Theorem}
\begin{document}

\maketitle



\begin{abstract}
In spite of much progress and many advances, cost-effective, high-quality video delivery over the Internet remains elusive. To address
this ongoing challenge, we propose \Sys, a solution that leverages
simultaneous downloads from multiple servers to preserve video
quality. The novelty in \Sys is not so much in its use of multiple
servers but in the design of a scheduler that balances improvements
in video quality against increases in (peering) costs.  The paper's
main contributions are in elucidating the impact on cost of various
approaches towards improving video quality, including the use of
multiple servers, and incorporating this understanding into the design
of a scheduler capable of realizing an efficient trade-off.  Our results show that 
\Sys's scheduling algorithm can significantly improve performance (up to $50\%$ in some instances) without
cost impacts.

\eat{
In spite of much progress, consistent high-quality video delivery over
the Internet remains elusive. The paper tackles this challenge through
an application layer multipath solution that lets video clients
simultaneously download from multiple servers.  Clients distribute
requests for video chunks across servers based on the solution of a
linear optimization model that seeks to minimize rate variations while
meeting minimum rate constraints.  The paper demonstrates the efficacy
of the approach but also highlights that its benefits are often at
the expense of significantly higher peering costs for the video
provider.  The paper then develops a cost-aware server selection
algorithm that preserves the video quality improvements a multipath
solution affords while mitigating its impact on peering costs.  An
implementation of the video client and server selection algorithm are
used to experimentally validate the results on the Emulab testbed.
}

\end{abstract}

\Section{Introduction}
\label{sec:intro}

The growing importance of video traffic is by now well-documented,
with the share of video traffic on North American networks exceeding
$70\%$ of peak hour traffic in early 2016 and expected to surpass
$80\%$ by the end of 2020~\cite{sandvine16}. And while its dominance
is not as strong on mobile access links where it now represents about
$40\%$ of peak traffic, it is a major factor there as well.
Furthermore, this growth appears unimpeded by continued progress in
codec efficiency, \eg x265 or VP9
codecs~\cite{kufa15,uhrina14,ozer16}.

In the face of such a trend, or maybe because of it, there are,
however, persistent problems when it comes to ensuring quality video
delivery.  For example, Conviva VXR reports,
\eg\cite{vxr14,vxr15,conviva15}, which each year track various video
performance metrics, report that buffering events (periods during
which the video stalls to replenish its playback buffer) remain common
(from $20\%$ to $40\%$ depending on location), as do drop in video
resolution while playing (those affect over $50\%$ of all
videos). Equally important, degradation in video quality is also known
to have a major impact on users' behavior and their satisfaction with
Internet video~\cite{krishnan12}.  This is obviously of concern, as
articulated in several recent industry forums focused on video
delivery~\cite{cdnsummit13,cdnsummit14,cdnsummit15,cdnsummit16}.

The first goal of this paper is, therefore, to explore possible
solutions to remedy this situation and improve the quality of Internet
video delivery.  Of interest in this context is the use of multipath
solutions, and in particular solutions that let video clients download
video segments (chunks) simultaneously from multiple servers.
We refer to such solutions as {\em Multipath-MultiServer (MuMS)},
\eg\cite{dynamic,youtuber}.
Reliance on paths from multiple distinct servers can help mitigate
exposure to quality degradations due to congestion or failure
of individual paths, and server overload.  In particular, multipath
has been shown useful in improving throughput and reliability in
both wired and wireless
networks~\cite{apostolopoulos04,Chen:2004ih,chen09,ganesan2001highly,golubchik02,Radi:2012en},
and there is initial evidence that it could also benefit
delay~\cite{javed09} as well as rate stability~\cite{arzani12}.  The
latter is of particular interest, as rate variations are a major
factor in video quality degradation.

Specifically, a typical video streaming client operates in two
phases~\cite{rao11}: pre-buffering and re-buffering.  A video is split
into segments (chunks), and in the pre-buffering phase the client
requests chunks at the maximum rate allowed by the
network\footnote{This holds for both video download and
  live-streaming, though with obvious limitations on the pre-buffering
  phase for the latter.}. Once there are enough chunks in the client's
playback buffer, it starts playing the video and switches to
re-buffering mode. In this mode, the client requests chunks at a fixed
rate determined by the encoding rate.  Video playback proceeds
smoothly as long as chunks are in the playback buffer before they need
to be played-out.  Variations in transmission capacity between the
server(s) and the client can result in late delivery or loss of video
chunks.  This, in turn, depletes the playback buffer and eventually
induces video stalls or skips.

A popular approach for dealing with variations in transmission
capacity is adaptive bit rates (ABR)~\cite{abr96,abr16} -- sometimes
also called HTTP Adaptive Streaming (HAS).  Under ABR/HAS, the client
changes its encoding bit rate to match the capacity available in the
network and avoid losses/delays.  However, rate changes remain visible
to the viewer and still translate in degraded quality of experience
(QoE)~\cite{sogaard16,garcia14,seufert14}, albeit at a lesser
level. As a result, it remains desirable to devise solutions that
eliminate or mitigate variations in transmission rate, and 
preserve video quality without having to resort to (coding) rate
adjustments.  This is one of the goals of our solution, \Sys, which
seeks to leverage multiple paths to different servers to maintain a
stable transmission rate even in the presence of network variations
(on individual paths).  In this respect, \Sys is complementary to
ABR-based solutions.


Another goal of \Sys is to realize its goal of mitigating rate
variations without impacting peering costs, \ie the costs video
providers incur from Internet Service Providers (ISPs) for delivering
video to their customers.  Ensuring that better quality in video
delivery does not translate into higher peering costs is of particular importance given the low profit margins under which most video
providers operate~\cite{ball15,bonte10}.  Of concern in our
context is the extent to which reliance on multiple paths might
increase peering costs.  The possibility of such increases is, in
hindsight, intuitive given the non-linear nature of most ISPs'
charging model, \ie
most charge based on the $95^{\mbox{th}}$ percentile of usage in
$5$~mins intervals over a period of a month\footnote{See
  \url{https://www.noction.com/blog/95th_percentile_explained.}}.
Spreading transmissions over multiple paths means that a separate
$95^{\mbox{th}}$ percentile is now computed on each path, which, as
discussed in Section~\ref{sec:mums}, can result in a higher overall
cost value.

In summary, \Sys's main contributions are as follows:

\noindent{\bf A cost-neutral solution to improving video delivery.}
We develop a principled understanding of how different mechanisms
for improving video delivery, including multipaths, contribute to
higher (peering) costs.  We use this understanding to develop a
scheduler capable of delivering significant performance improvements
with little to no impact on cost.

\noindent{\bf A video client improving users' QoE.}  We implement a
\Sys video client in user space, and demonstrate its benefits by
quantifying its ability to minimize video stalls/skips across a broad
range of network impairments.  Our Emulab~\cite{emulab} results
indicate that \Sys can improve video performance quality by up to
$50\%$ or more in various settings.

\Section{Related Works}

There has obviously been much work on optimizing video transmission
and using multipaths to overcome network impairments.  Our intent is
not to provide an exhaustive review, but rather to summarize major
approaches and highlight similarities and differences with this work.

\Subsection{Video Delivery Optimizations}

\noindent {\bf Adaptive bit rate (ABR)}~\cite{abr96,abr16} is, as
mentioned earlier, a powerful approach for mitigating the impact of
network rate variations by allowing clients to correspondingly adjust
their video coding rate.  The main drawback of ABR is that it requires
servers and caches to store multiple encodings of the same video, or
codecs to be able to dynamically update their coding rate.  In
addition, adjustments in coding rates still produce noticeable changes
in video quality~\cite{sogaard16,garcia14,seufert14}. Our goal with
\Sys is complementary to ABR, in that we aim to leverage multiple
paths to different servers to minimize (network) rate variations, and
therefore coding rate changes.

\noindent {\bf Caching} is another popular strategy.  It improves
performance by moving files as close as possible to clients through
caches located at the network edge.  This is, however, not always
effective, in part because copyrights laws make much content
un-cachable, and the combination of the long tail of video
popularity~\cite{longtail} and the use of ABR can lower cache
efficiency.  Consequently, even smart caching algorithms
only boast a cache efficiency of about
$50\%$~\cite{qwiltreport} (for ABR videos). \Sys is meant to improve
video delivery in instances when video cannot be served from a local
edge cache.

\noindent {\bf OpenConnect}~\cite{openconnect} was proposed by
Netflix. It relies on embedding appliances in ISPs' networks to locate
content closer to clients and to preemptively populate caches at
off-peak hours to avoid cache warm-up and network congestion during
peak hours. It calls for partnership between content providers and
ISPs, \eg locating appliances in the ISP's facility, which some large
ISPs are reluctant to engage in as they have competitive
businesses~\cite{netflixdeal}.
 
\noindent {\bf Content filtering} limits content available to
users to content that can be delivered with high quality, \eg
from caches.  This is realized by applying filters that limit viewable
listings to a subset of (popular) videos, or by steering users away
from unpopular items~\cite{UCG}. Both approaches result in potential
loss of revenue, \eg removing such filters can increase the number of
views by $45\%$~\cite{UCG}.

\noindent {\bf Dynamic CDN switching} is offered by companies such as
Conviva and Cedexis which act as brokers in the CDN domain.  They
measure CDN status and switch between CDN 
providers based on performance. The main disadvantage is that CDNs no
longer control how clients are redirected to servers, which can have
unintended consequences, including higher costs~\cite{broker}. 
In contrast, \Sys keeps the assignment of clients to servers under the
video provider's purview and incorporates mitigating cost increase as
an explicit criterion.

\noindent{\bf Hybrid CDN-P2P} seeks to combine the best of CDN and
peer-to-peer solutions~\cite{cdnp2p}.  Netsession~\cite{netsession}
offers a representative example of the potential benefits of such an
approach.  Unlike \Sys, it again does not offer an explicit control on
how improving performance affects a provider's cost.  Additionally,
aspects such as copyright management are traditionally difficult to
handle in a P2P setting.

\eat{

\Subsection{Background: ISP and Video Delivery}
\label{sec:background}

\noindent{\bf{Peering Cost: }}We start by providing some background.  Traditionally, ISPs charge
CDNs based on a {\em percentile charging model}. The ISP monitors the
CDN's traffic volume at the peering link and records its value every 5
minutes. At the end of a billing cycle, traffic records are sorted
and, in a \textit{$q$-percentile} model, the upper $(1-q)\%$ are
discarded. The largest record remaining is the {\em charging volume}
that determines the CDN's cost.  Given the dependency of peering costs
on traffic temporal variations, MuMS clients shifting traffic from one
link to another (to improve their performance) can clearly affect
costs.

\noindent{\bf{Clients: }}The operation of a typical video streaming client has two
phases~\cite{prebuf}: pre-buffering and re-buffering.  A video is
split into frames (chunks), and in the pre-buffering phase, the client
requests chunks at the maximum rate allowed by the network. Once
enough chunks are accumulated in the buffer, the client switches to
the re-buffering phase, where it requests chunks at a fixed rate
determined by the encoding rate. For a single-path client, the
encoding rate fully determines the number of chunks that need to be
requested at any point in time.

MuMS clients are different, as they split their
requests between servers they have been assigned. As single path
clients, a MuMS client fetches chunks at the maximum possible rate
from all its servers during the pre-buffering phase. In the
re-buffering phase, the aggregate bandwidth available to a MuMS client
is typically higher than what it needs.  Its goal is
then to distribute requests across servers to meet rate guarantees
while maximizing rate stability.  This calls for estimating the
available bandwidth to each server and distributing requests to avoid
congesting the network and/or servers. This is the task assigned to the MuMS 
scheduler.
}
\eat{The vast research on transport layer protocols proves useful here, as
it can be re-purposed in an application layer implementation. Namely,
the MuMS client can use a method similar to that of TCP to estimate
the bandwidth delay product to each server\footnote{The idea of using
  a TCP-like algorithm is not new, \eg\cite{multihoming2}.}. The
resulting per server windows provide upper bounds on the number of
requests the client may have outstanding to each server without
incurring the risk of congestion. The MuMS scheduler can then decide
how many requests to send to each server to meet its target rate and
maximize rate stability.  Those decisions, as we shall see, can affect
both performance and cost.}


\eat{
The extent of this impact will depend on the client's behavior, and
Section~\ref{sec:highlevel} offers a brief overview
of a typical MuMS client design, and therefore the type of behaviors
it can give rise to.  The potential impact on cost of those behaviors
is then illustrated in Section~\ref{sec:tradeoff} through two
representative scenarios.  Section~\ref{why} offers a more systematic
investigation of when and how a MuMS client optimizing its performance
can affect cost.  In so doing, it develops an understanding the
relationship between performance and cost, which is leveraged in
Section~\ref{sec:design} to design a MuMS system capable of realizing
an efficient trade-off.  The design's ability to meet its goals are
evaluated in Section~\ref{sec:evalmums}.}

\Subsection{Multipath Solutions}

The benefits of multipaths have been studied in
numerous settings, \eg\cite{conga,MRTP,mmsys}, but perhaps most
visible among them are studies of Multipath TCP
(MPTCP)~\cite{nsdi2012}, whose investigations related to congestion
control~\cite{conj1,conj2,analdesign,arzani14} or
scheduling~\cite{pams14,mptcpscheduler14}
are of most relevance, even if not directly applicable because of MPTCP's
assumption of a single source and a single destination. Nevertheless,
several techniques developed to improve MPTCP can be repurposed in a
MuMS setting.  For example, as discussed in Section~\ref{sec:design},
\Sys is able to leverage MPTCP's opportunistic retransmit to improve
its performance.

More directly comparable to \Sys are works that explicitly target
improving video delivery by relying on multiple servers. In most such
settings, \eg\cite{dynamic,youtuber}, the focus has, however, been on
optimizing download \emph{rates}, which, as we shall see, can have a
significant impact on cost.  Specifically and as discussed in
Section~\ref{sec:mums}, while the more aggressive download strategies
of~\cite{dynamic,youtuber} can reduce the odds of skips and stalls,
they typically result in higher costs.  In contrast, \Sys aims to
realize comparable improvements in video quality, but with little to
no increases in cost.


\Subsection{Server selection and cost optimizations}

Another relevant body of work is that of server selection algorithms
that optimize for a given metric, \eg performance or
cost~\cite{centralized,donar,donar2,ishai}. Extending those approaches
to a MuMS' setting is challenging as the
multipath nature of MuMS clients makes predicting variations in
traffic volumes at peering links more difficult than with single path
clients.  In particular, clients are now free to choose how to
distribute video requests across paths.  How this impacts cost and
performance adds a new non-trivial dimension to the problem.  

In this context, the approach closest to \Sys is~\cite{multihoming2}.
It considers both performance and cost and adopts a cost
minimization formulation with performance as a constraint, where for
each CDN performance is based on long-term QoE measurements from
clients in different regions.  Given an expected request load,
\cite{multihoming2} computes a ``prioritized'' list of servers that a
client should use when requesting content.  Higher
priority servers are to be used first as long as they have available
capacity.  A TCP-like AIMD mechanism is used to estimate the bandwidth
available to each server.  \Sys's approach differs from that
of~\cite{multihoming2} in that rather than minimizing cost and keeping
performance as a constraint, it leverages its understanding of the
relationship between cost and performance to select rate variation as
its minimization target.  In addition, \Sys's scheduler offers a more
responsive mechanism than that of~\cite{multihoming2}, which is
limited to the set of servers computed by its optimization.  In some
sense, the scheduler of~\cite{multihoming2} is similar to the min-RTT
scheduler of Section~\ref{sec:eval}, which, as we shall see, performs
significantly worse than that of \Sys in terms of both cost and
performance.

Finally, of note in the context of cost optimization
is~\cite{shapley}, which attempts to account for the contributions of
individual users to a $95^{\mbox{th}}$~percentile cost function. 
Although such an approach could be used to formulate an appropriate objective function for a server selection algorithm, it requires
detailed knowledge of the exact traffic patterns of each user.  This
is unlikely to be feasible, especially in a multipath setting where
variations on a given path affect traffic on all paths.

\Section{MuMS Benefits and Implications}
\label{sec:mums}


Before presenting \Sys, we first motivate a MuMS approach by showing
the type of performance improvements achievable when downloading from
multiple servers. Next, we offer insight into the relationship that
exists between performance and (peering) cost, and in particular why
reliance on multiple paths, as in MuMS, can result in higher costs.
This illustrates the need for an approach that balances performance
and cost, and offers a possible direction for \Sys's design towards
realizing such a goal.

\Subsection{Performance Benefits}

A MuMS solution should improve users' QoE as multiple servers (and the
paths from those servers) are unlikely to simultaneously experience
congestion or failures.  To assess the significance of those
gains, we compare the performance of a MuMS client to that of a
single-server client in an Emulab experiment.  To simplify our setup,
in all cases clients use only a single path to each server. 

The Emulab connection between the client and each server consists of
two links separated by a shaping node with a buffer size of
50~packets.  To create an environment that exercises bandwidth
limitations, the available bandwidth to each server from the client
was set to an average value equal to its download rate $T$ (as
determined by the video player), but with variations between
$\frac{T}{2}$ and $\frac{3T}{2}$. This was achieved using
dummynet~\cite{dummynet} on the shaping node.  Each client repeatedly
downloads a large video ($5$ minutes or more) and is assigned to a
fixed set of $n$ servers where $n$ is either $1$, $2$, or $3$.  In all
cases, video download proceeds by issuing {\tt http get} requests at a
rate commensurate with that of the video. In the single-server case,
TCP controls the actual download rate.  In the multi-server case, a
standard TCP-like application-level congestion control mechanism
determines the available rate from each server, and when multiple
servers are available, the lowest RTT server is selected.  This
represents a relatively basic ``scheduler,''
which nevertheless serves the purpose of demonstrating the
benefits of a MuMS solution. 

In evaluating performance, we focus on two metrics of importance to
video QoE, namely, stalls and skips.  Clients stall whenever their
playback buffer runs empty or the next chunk to playback is
missing. Skips occur, usually after a stall, when the player decides
to skip a missing chunk and resumes playback using a latter chunk.
Previous studies~\cite{QoE1,QoE2,QoE3} have verified the correlation
between user satisfaction and these metrics. We further verify their
impact on user experience through a Mechanical Turk experiment
described in Appendix~\ref{appendix:turk}. Note that there is an
inherent trade-off between the two metrics. A short time-out for
chunks increases skips but minimizes stalls and vice versa.


\begin{figure}[hbt]
\centering
 \includegraphics[width=0.8\linewidth]{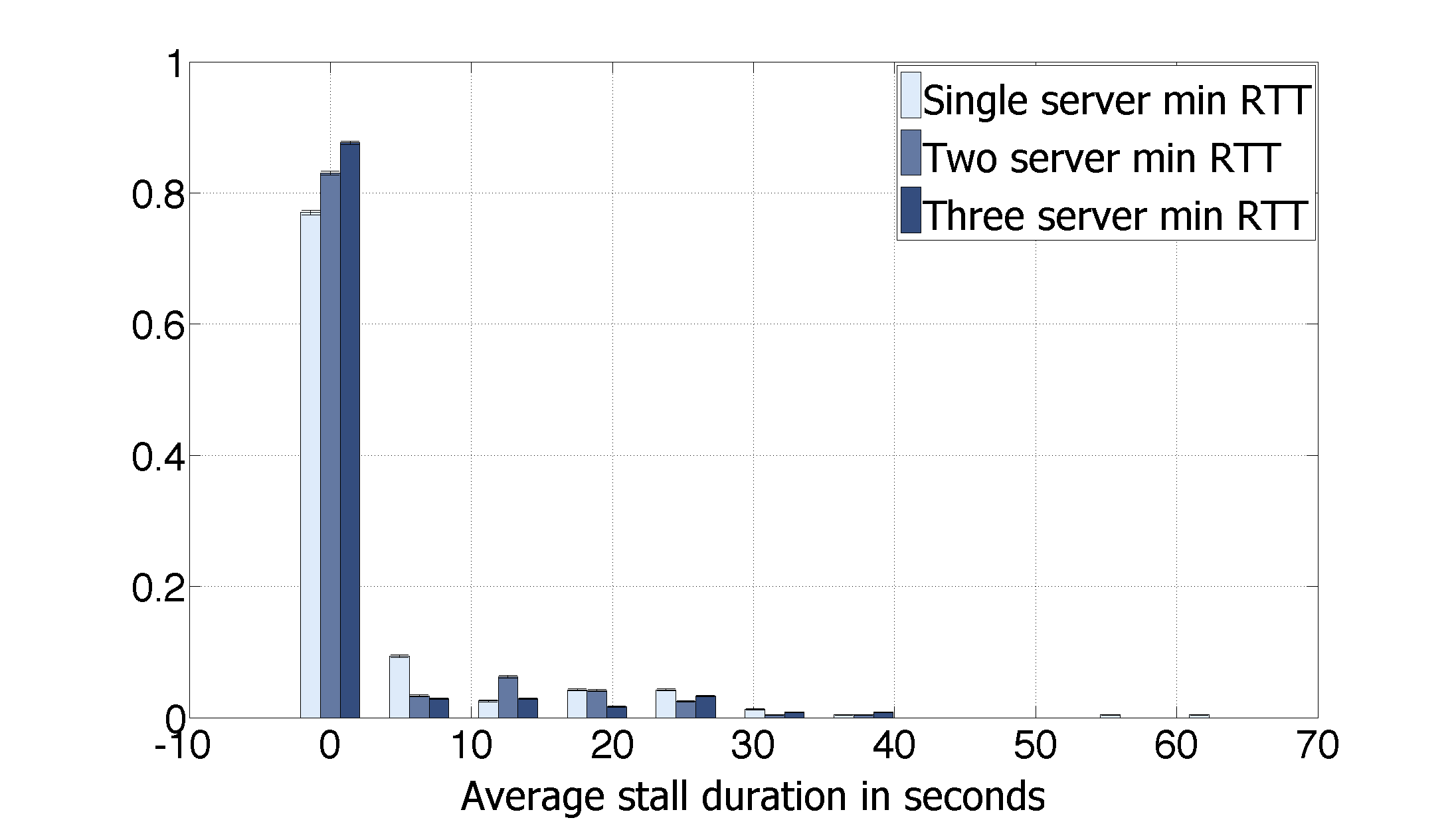}
\vspace{-1mm}
     \caption{Average stall duration across clients. \label{fig:stallsSinglePathMPTCP}}
\vspace{-1mm}
\end{figure}

\begin{figure}[hbt]
\centering
 \includegraphics[width=0.8\linewidth]{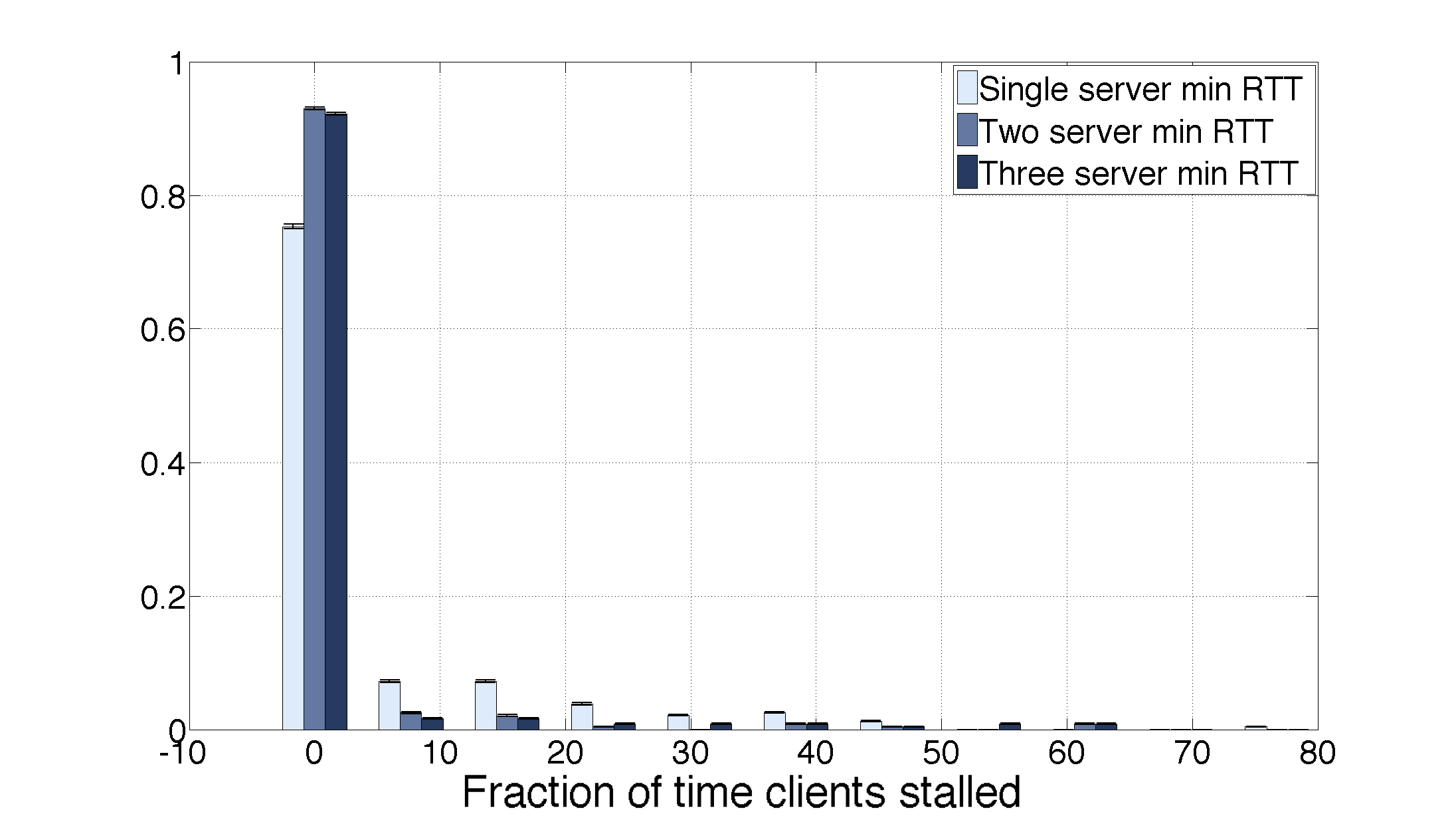}
\vspace{-1mm}
     \caption{Fraction of total download time each client was stalled (across clients). \label{fig:stallsSinglePathMPTCPfreq}}
\vspace{-1mm}
\end{figure}
\figs{fig:stallsSinglePathMPTCP}{fig:stallsSinglePathMPTCPfreq}
report the distributions (across clients) of the average stall time
and the average fraction of time clients are stalled for
configurations with $1,2$, and $3$ servers, respectively.
\fig{fig:skippedSinglePathMPTCP} focuses on the distribution of the
number of skipped video chunks for the same configurations. The
confidence intervals in the figure show the $95$ percent
confidence interval assuming a binomial distribution for data
in each bin. The figures establish the benefits of a MuMS solution,
which reduces \emph{both} stalls duration and the
number of skips.  For example, more than $91\%$ of $3$-server clients
saw no stalls, while the number was $90\%$ for $2$-server clients, and
$73\%$ for single-server clients; and those benefits persist among
clients that experienced longer stalls on average.  Similarly, over
$90\%$ $(82\%)$ of $3$-server ($2$-server) clients did not experience
any skips, while this number drops to below $80\%$ for single-server
clients, with again the benefits of a MuMS solution extending to the
tail of the distribution.

\begin{figure}[hbt]
  \centering
\includegraphics[width=0.75\linewidth]{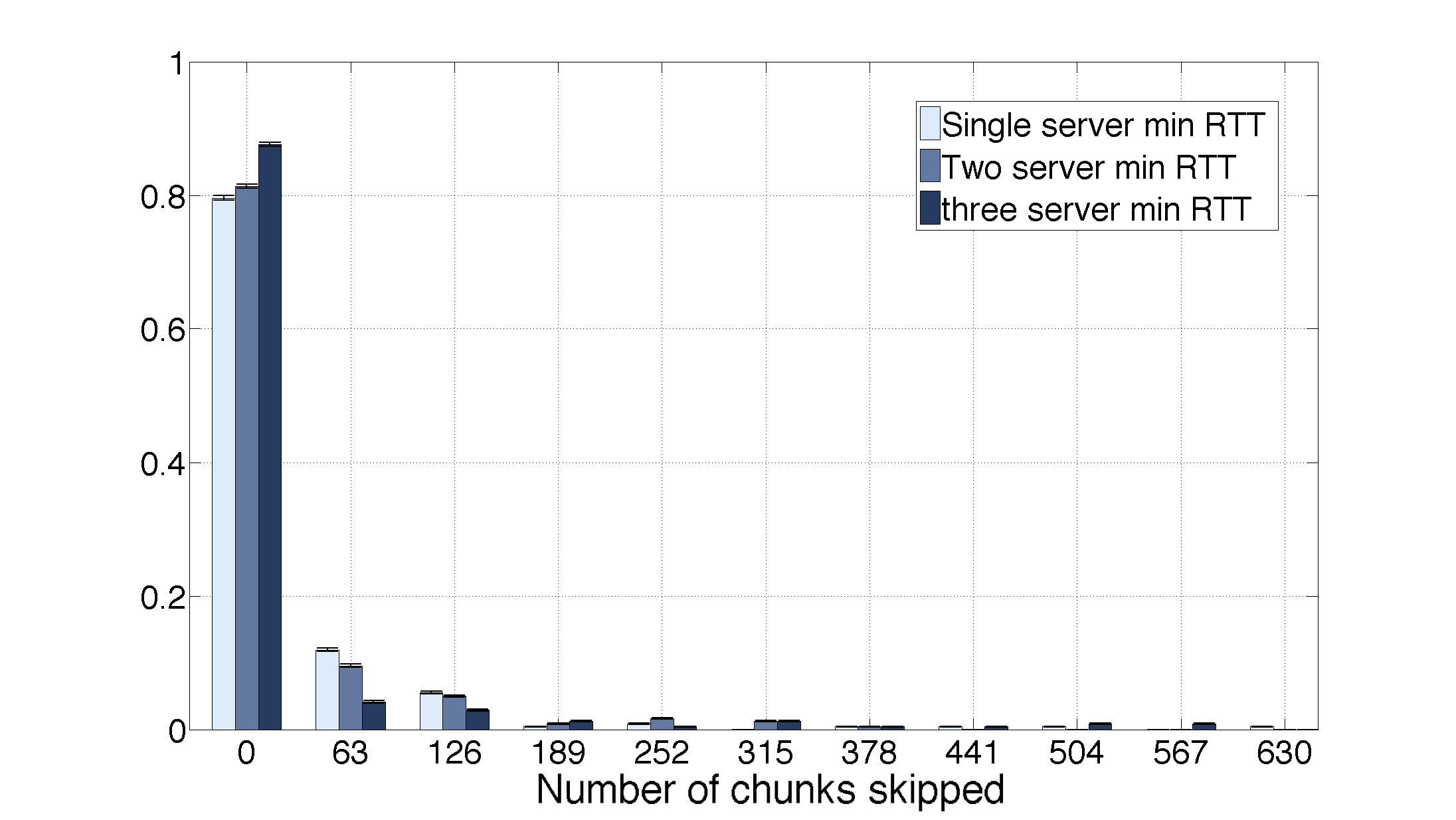}
\vspace{-1mm}
     \caption{Distribution of the number of skipped
       chunks. \label{fig:skippedSinglePathMPTCP}} 
\vspace{-1mm}
\end{figure}
In Section~\ref{sec:eval}, we show how \Sys's more sophisticated
scheduler can yield even further improvements.

\Subsection{Cost-Performance Trade-offs}
\label{sec:trade-off}
There are multiple options to improve video delivery.  A simple
approach is to download the video at the highest possible rate from
one or more servers, as it should minimize the odds of a chunk
arriving late and/or the playback buffer running empty\footnote{Note
  though that this comes at the expense of larger playback buffers at
  the clients, and a potential waste of bandwidth when clients abandon
  watching the video halfway.}.  This is why works such
as~\cite{youtuber} focus on maximizing clients' throughput during
their ``on'' period (the time during which the client fills its
playback buffer).

However, while performance benefits are intuitive, it is unclear how
such a scheme affects \emph{cost}. On one hand, a strategy
that maximizes throughput has clients leaving the system earlier,
which can reduce bandwidth usage when computed over $5$ minute
intervals.  On the other hand, the higher download rates while clients
are present can increase bandwidth usage.  How these two opposing
factors contribute to $95^{\mbox{th}}$-percentile costs is not obvious
at first sight.

To gain a better understanding of this trade-off, we develop a simple
analytical model to evaluate the impact of the download rate on
peering costs.  We consider a scenario where: (1) clients connect to a
fixed set of $k$ servers with distinct peering links for each server;
(2) clients download a video of size~$S$ at a constant aggregate rate
of $T$ and distribute download requests equally across servers (the
download rate for each server is $T/k$);
(3) clients arrive according to a Poisson process of rate $\lambda$;
and (4) peering costs follow a $q$-percentile model ($q=95$ in a
typical scenario). 

Theorem~\ref{theo:cost} establishes that the higher the download
rate~$T$, the higher the peering cost.  In other words, while a more
aggressive download strategy may improve performance, it results in
higher costs.

\begin{theorem}
\label{theo:cost}
Given clients arriving according to a Poisson process and downloading
equally from $k$~servers at an aggregate rate of~$T$, the
$q$-percentile peering cost is an increasing function of~$T$.
\end{theorem}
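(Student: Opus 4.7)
The plan is to reduce the problem to analyzing the steady-state distribution of traffic on a single peering link, since by symmetry all $k$ links see the same load. First I would model client lifetimes: each client downloads a video of size $S$ at aggregate rate $T$, and so remains active for a deterministic duration $S/T$. Since arrivals are Poisson with rate $\lambda$ and every client simultaneously contributes rate $T/k$ to every peering link, the number of concurrent clients on a given link at a random time in steady state is, by the classical $M/G/\infty$ result, Poisson-distributed with mean $\mu(T) = \lambda S / T$. The instantaneous traffic volume on that link is therefore $R(T) = (T/k)\, N(T)$ where $N(T) \sim \text{Poisson}(\mu(T))$, and the peering cost $c(T)$ is the $q$-percentile of $R(T)$.

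Next I would compute mean and variance. A short calculation gives $\mathbb{E}[R(T)] = (T/k)\mu(T) = \lambda S / k$, which is independent of $T$ (consistent with Little's law: the total offered load $\lambda S$ is simply split across $k$ links). In contrast, $\Var[R(T)] = (T/k)^2 \mu(T) = \lambda S T / k^2$, so the standard deviation grows as $\sqrt{T}$. This already captures the intuition: more aggressive downloads shorten sojourns and inflate per-client rates, leaving fewer but burstier users active on each link, which pushes the high-percentile tail up while leaving the mean untouched.

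To turn this into a statement about the $q$-percentile itself, I would invoke a Gaussian approximation for the Poisson (justified for the usual regime where many concurrent clients share a peering link), yielding
\begin{equation*}
c(T) \;\approx\; (T/k)\bigl(\mu(T) + z_q \sqrt{\mu(T)}\bigr) \;=\; \frac{\lambda S}{k} + \frac{z_q}{k}\sqrt{\lambda S T},
\end{equation*}
where $z_q$ is the standard-normal $q$-quantile. For any $q > 1/2$ we have $z_q > 0$, and the right-hand side is manifestly strictly increasing in $T$; differentiating gives $c'(T) = z_q \sqrt{\lambda S}/(2k\sqrt{T}) > 0$. This closes the argument under the approximation.

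The main obstacle is justifying the monotonicity rigorously without appealing to the Gaussian limit, because $x \mapsto F^{-1}_{\text{Poisson}(\mu(T))}(q)$ is a step function of $T$ and need not be monotone in a strict pointwise sense. I would address this either by (i) restricting to the operational regime $\mu(T) \gg 1$ and bounding the Berry--Esseen-type error against the $\sqrt{T}$ growth of the Gaussian term, or (ii) using a stochastic-dominance argument, comparing $R(T)$ and $R(T')$ for $T > T'$ via a coupling that exhibits $R(T)$ as the sum of $R(T')$ and a non-negative correction whose contribution to the $q$-percentile scales like $\sqrt{T - T'}$. Either route suffices to conclude that, up to the lattice effects inherent to a discrete queue, the $q$-percentile cost is an increasing function of $T$.
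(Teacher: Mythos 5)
Your proposal follows essentially the same route as the paper's proof: model each peering link as an $M/G/\infty$ system with Poisson occupancy of mean $\lambda S/T$, apply the normal approximation to get a $q$-percentile occupancy of $\rho + z_q\sqrt{\rho}$, multiply by the per-client rate $T/k$, and observe that the resulting cost $\lambda S/k + z_q\sqrt{\lambda S T}/k$ increases in $T$. Your added remarks on the $T$-independence of the mean, the explicit derivative, and the lattice/Berry--Esseen caveats are refinements of, not departures from, the paper's argument, which simply invokes the large-$\rho$ Gaussian approximation without further justification.
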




\begin{proof}
Assuming a properly provisioned system, \ie non-blocking, each server
plus peering link combination behaves as an $M/G/\infty$ system whose
occupancy probability is given by $\pi_i=\frac{e^{-\rho}\rho^i}{i!}$,
where $\rho=\frac{\lambda S/k}{T/k}=\frac{\lambda S}{T}$.  Assuming
$\rho$ is large, \ie we are dealing with large systems, $\pi_i$ can be
approximated by a normal distribution with mean and variance equal to
$\rho$.

The $q$-percentile occupancy $n(q)$ of the system (each client is
assigned one ``server'' with a service/download rate of $T/k$) is
then of the form  
\begin{align*}
\phi \left(\frac{n(q)-\rho}{\sqrt{\rho}}\right)=q
\end{align*}
where $\phi(x)$ is the CDF of the normal distribution. This implies
$\frac{n(q)-\rho}{\sqrt{\rho}}=\alpha$ where $\alpha > 0$ is constant, \eg for $q=0.95$, $\alpha=1.64$.  Thus,
$n(q)=\alpha\sqrt{\rho}+\rho$, and the $q$-percentile traffic volume
on the corresponding peering link is $n(q)T/k$.  Hence,
under a $q$-percentile cost model,
the peering cost for the system is $C_q(\lambda,T,k)\sim
(c\sqrt{\rho}+\rho)T/k$, an increasing function of $T$. 
\end{proof}

Although Theorem~\ref{theo:cost} relies on a number of simplifying
assumptions, it nevertheless captures the key factor that while
increasing download rates allows clients to leave the system faster
its overall impact on cost is negative.  In other words, downloading
at the lowest possible rate that meets the video requirements yields
the lowest cost.  As described in Section~\ref{sec:design}, we
leverage this insight in designing the \Sys scheduler.

\Subsection{Impact of MuMS on Cost}
\label{sec:why_cost}

The previous section established that a greedy/aggressive download
strategy had a negative impact on cost.  In this section, we show that
the multiple paths of a MuMS' solution have a similar effect.  For
that purpose and without loss of generality, we consider a system with
a single server reachable by clients over either one or two peering
links.  When two peering links are available, clients split their
traffic across the two links according to some strategy. Under the two
peering links configuration, we denote as $\pmb{x}_1$ and $\pmb{x}_2$
the vectors of traffic volumes recorded in $5$~minute intervals on
links~$1$-$2$, respectively.  Correspondingly, traffic on the peering
link of the single server configuration is denoted as~$\pmb{x}$.
Ignoring the impact of short time-scale traffic fluctuations, we have
$\pmb{x}=\pmb{x}_1+\pmb{x}_2$, and therefore
$\max\{\pmb{x}\}=\max\{\pmb{x}_1+\pmb{x}_2\}\leq\max\{\pmb{x}_1\}
+\max\{\pmb{x}_2\}$.
Hence, under a peak rate charging model, a single path solution yields
a lower cost.


There are obviously additional factors at play when considering the
multiple servers of MuMS clients and a $95^{\mbox{th}}$ percentile
rather than peak charging model.  Nevertheless, this captures a
fundamental aspect of multipath solutions and their 
impact on many non-linear cost functions.  Mitigating this impact
is one of the goals of \Sys.


\eat{
\Section{Related Work}

\noindent \textbf{MuSS:} 

\noindent \textbf{MuMS:} 

\noindent\textbf{Studies on QoE} Significant studies have been conducted to identify what video performance
metrics have the most significant impact on client performance e.g.~\cite{QoE1,QoE2}. While these work guided us in our designs 
they are orthogonal to the goals of this paper, which is to show that it is possible to achieve significantly higher QoE through multi-server clients.

}

\eat{ 
\textbf{Measurement studies:} Finally, prior measurement studies have
been conducted to dissect server selection in popular video delivery
services such as YouTube~\cite{youtube}, Netflix~\cite{netflix}, and
Hulu~\cite{hulu}. These work provide further evidence that content
providers do not assign servers to clients solely based on
performance, but rather cost and load balancing play an important part
in the decision process.
}

\section{SunStar Client}
\label{sec:design}





\fig{fig:architecture} shows the overall architecture of the \Sys
client from the point of view of a single client. \Sys runs as a
middleware between the video player (and its codec) and the network
layer in the operating system.
The \Sys middleware initiates a series of {\tt http} connections to
remote servers which store the video content. These remote servers are
themselves part of larger server farms where the same video content is
replicated on multiple servers. They are
assigned to a client requesting a video through manifest files that
the client downloads from an initial designated server, and are chosen
through server selection algorithms imposed by the provider. Multiple
servers within the same server farm may be selected.

The internals of the \Sys middleware are as follows. A multi-threaded
{\em {\tt http} connection pool} maintains connections to the selected
servers. For a given video, the \Sys middleware determines the next
set of video chunks to download through each of the available {\tt
  http} connections in the connection pool.  This allocation is
determined by the {\em Scheduler} (\S~\ref{sec:scheduler}). We revisit
the scheduler formulation later in the section, but in a nutshell its
goal is to allocate requests across servers so as to realize the best
possible trade-off between cost and performance (video quality). Based
on the results of \S\ref{sec:trade-off} this boils down to reducing
the download rate as much as possible to control cost, while ensuring
that client's performance does not suffer\footnote{Note, that as long
  as video quality remains high, clients have an incentive to
  cooperate with any video-download strategy.}.

Note that barring an unacceptably long pre-buffering phase (to
build-up a large playback buffer), blindly downloading at a constant
low rate (as assumed in the calculations of~\S\ref{sec:trade-off})
will result in poor performance, as it makes the playback buffer
vulnerable to lulls in network bandwidth. Such lulls are unavoidable
in the Internet, and the challenge is then in compensating for them
with rate increases that are as low as possible while still avoiding
instances of buffer underflow. 
\Sys's scheduler relies on a mathematical model for quantifying this
trade-off, and uses it to determine how to adapt its download rate
from each server. More formally, the scheduler aims to achieve an
(average) target rate of $T$ while ensuring minimum variations around
it. We peg $T$ to be at least the playback rate of the client
to make sure that the playback buffer is never empty, but
choose it to be as small as possible to minimize the impact
on cost.  The scheduler runs periodically to determine the
number of requests to be sent to each server in a given epoch to
achieve the target rate $T$ and low variance.

After chunks are downloaded, they are assembled into frames in the
playback buffer. As its name suggests, the {\em Bandwidth estimator}
(\S~\ref{sec:core}) uses the rate at which requested chunks arrive on
each connection to estimate the available bandwidth ($R_1, R_2, ...$)
to each server. The average bandwidth $\hat{R}_i$ to server $i$ and
its bandwidth variance $\Var(R_i)$ are reported to the scheduler as
inputs to its scheduling decisions.

After the scheduler determines the number of chunks that need to be
requested from each server, the information is sent to the {\em
  Request manager} (\S~\ref{sec:manager}). The request is in the form
$\alpha_1, \alpha_2, ...$, where $\alpha_i$ is the number of chunks to
download from server $i$. Note that chunks are of fixed size, and it
is the number of chunks requested from server $i$ (\ie $\alpha_i$)
that varies to adapt to current/predicted network conditions. We
discuss alternatives to this approach and their shortcomings
in~\S\ref{sec:alternative}.


In the rest of this section, we provide additional details on the main
components of the \Sys client, namely, the Scheduler, Bandwidth
Estimator, and Resource Manager.


\begin{figure}[hbt!]
\centering
\includegraphics[width=0.85\linewidth]{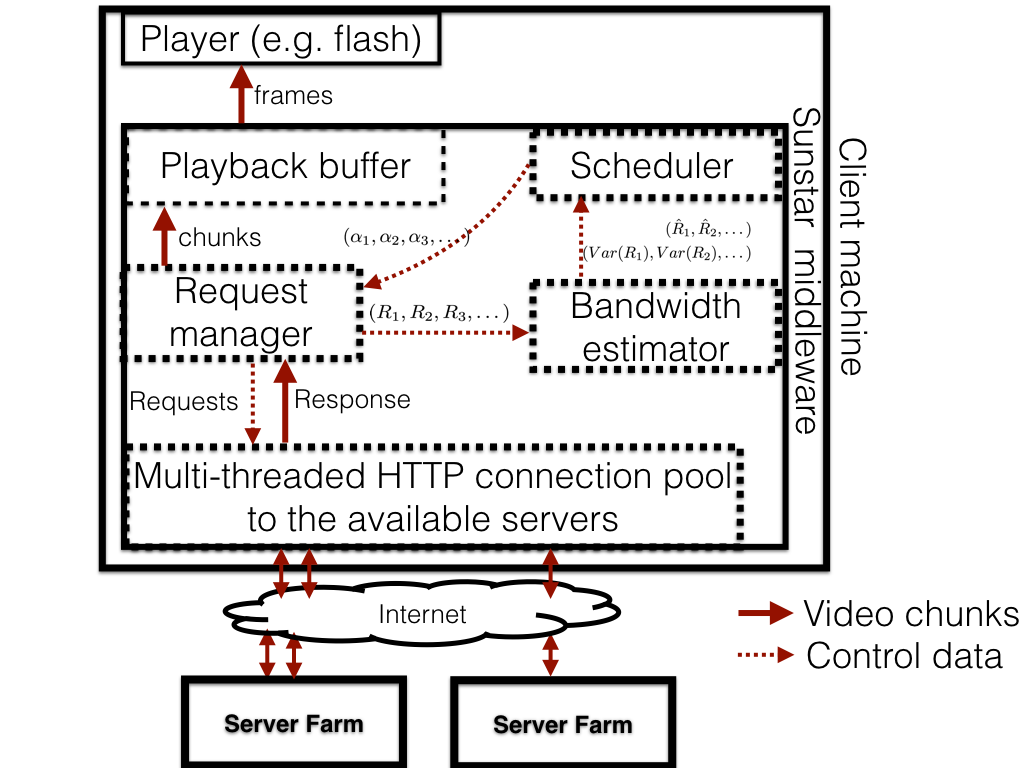}
\caption{\Sys architecture.}
\label{fig:architecture}
\end{figure}

\subsection{The \Sys Scheduler}
\label{sec:scheduler}

The scheduler is the core component of the \Sys
client\footnote{\S\ref{sec:eval} and \S\ref{sec:cost} show that
  it not only delivers significantly better performance than other
  schedulers, but also that it realizes those improvements with little
  to no impact on cost.}. Its main goal is to minimize playback stalls
(and therefore skips) as their frequency and duration are known to
have a significant influence on QoE~\cite{QoE1,QoE2} (see also
our Mechanical Turk experiments in Appendix~\ref{appendix:turk}).
Playback stalls are a direct consequence of an empty playback buffer,
which arises when the download rate falls below the playback rate for
an extended period of time. To minimize the odds of such occurrences,
the \Sys scheduler periodically runs an optimization that computes how
many chunks to request from each server to ensure a target average
download rate while minimizing variations around that average rate.

The optimization is greedy and myopic, \ie based on current
performance estimates for the paths to each server and does not
attempt to predict future performance. This is motivated by the fact
that it is difficult to predict bandwidth fluctuations ahead of time,
and a myopic algorithm can quickly correct course when conditions
change.

The \Sys scheduler operates in ``epochs'' or ``rounds'', whereby it
seeks to minimize the \textit{increase} in variance in each
round. This lets the optimization decide the number of requests for
each server based on short-term (per round) bandwidth estimates, and
adjust them in each round to respond to changes. An alternative would
rely on longer term bandwidth estimates, but this makes the algorithm
less responsive to occasional outliers (\eg network
failures/congestion events that happen every so often).
The round duration is set empirically (see~\S\ref{sec:exec} for
details) as a compromise between responsiveness, variability of the
estimates, and computation overhead (the optimization runs in each
epoch).  The optimization takes as input the current bandwidth
estimates to each server.  It then computes a target
rate (number of requests) for each server in the next round.  

The rest of this subsection describes the formulation of this
optimization in more details.

\subsubsection{Scheduler Optimization}
\label{sec:sched_optim}


Let $\mathcal{S}$ be the set of servers assigned to a MuMS
client. The goal is to guarantee each client an average
rate~$T$ (note that in the case of an ABR codec, this value would
change whenever the codec opted to adjust the rate based on its own
decision function), while minimizing \textit{changes} in the running
variance expressed as 
\begin{align}
(\sum_{i\in \mathcal{S}}{\alpha_{i} R_i} - T)^2,
\end{align}
where $R_i$ is the inverse of the time it takes for requests to arrive
from server $i$ (in other words $R_i$ is the inverse of the application
level round trip time), and $\alpha_i$ is the number of
chunks that the client should request from server $i$. The client's
attained rate is thus $\sum_i\alpha_i R_i$. 

Minimizing changes in the running variance then translates into
solving the following optimization: 
\begin{align}
\label{eq:opt1}
&\min_{\alpha} \mid \sum_{i \in \mathcal{S}} \alpha_i R_i - T \mid\\
&s.t. \quad  \sum_{i \in \mathcal{S}} \alpha_i \widehat{R}_i \ge T \nonumber\\
&\quad \quad \quad \alpha_i \le w_i \nonumber,
&\end{align}
where $\widehat{R}_i$ is the expectation of $R_i$, and $w_i$ the
current window size\footnote{$w_i$ upper bounds the number of requests
that can be sent to each server to enforce TCP-like congestion control
-- see~\S\ref{sec:core}.} to server $i$ (both computed by the
bandwidth estimator -- see~\S\ref{sec:core}).
The $\alpha_i$ are the decision variables and determine the number of
requests allocated to each server. Let $R_i^u$ and $R_i^l$ be upper
and lower bounds for $R_i$, respectively.  $R^u$ and $R^l$ can be
estimated using Chebychev's inequality by identifying the upper/lower
bounds on the $95$ and $5$ percentiles of the distribution of $R$.
\Eqref{eq:opt1} is then equivalent to:
\begin{align}
\label{scheduler}
&\min_{\alpha,t} \quad t  \\
&s.t. \quad \sum_{i \in \mathcal{S}} \alpha_i \widehat{R}_i \ge T \nonumber\\
&\quad \quad \sum_{i \in \mathcal{S}} \alpha_i \widehat{R}^u_i \le T+t \nonumber\\
 &\quad \quad \sum_{i \in \mathcal{S}} \alpha_i \widehat{R}^l_i \le T-t \nonumber\\
&\quad \quad \alpha_{i} \le w_i \nonumber.
\end{align}
\Eqref{scheduler} is derived
by first converting the absolute value form of the problem to its
linear form and then replacing the two resulting bounded constraints
with tighter bounds through $R_i^u$ and $R_i^l$.

The above formulation assumes integer values for $\alpha_i$, but
this can occasionally result in significant overshoots in the realized
rate.  We, therefore, use fractional values for $\alpha_i$.
Because requests are for an integer number of chunks, we
maintain a state variable that accounts for the ``excess'' rate $y_i$
to each server.  After solving the optimization, we compute
$\max(\alpha_iR_i-y_i,0)$ and use this value as the target rate to
server~$i$. When this corresponds to a fractional number of chunks, we
then round it up and update $y_i$ accordingly.

There are two other aspects to the above optimization that need
discussion, as they affect the implementation of the \Sys scheduler.

\begin{figure*}[hbtp!]
\centering
\begin{subfigure}{.5\textwidth}
  \centering
   \includegraphics[width=.75\linewidth]{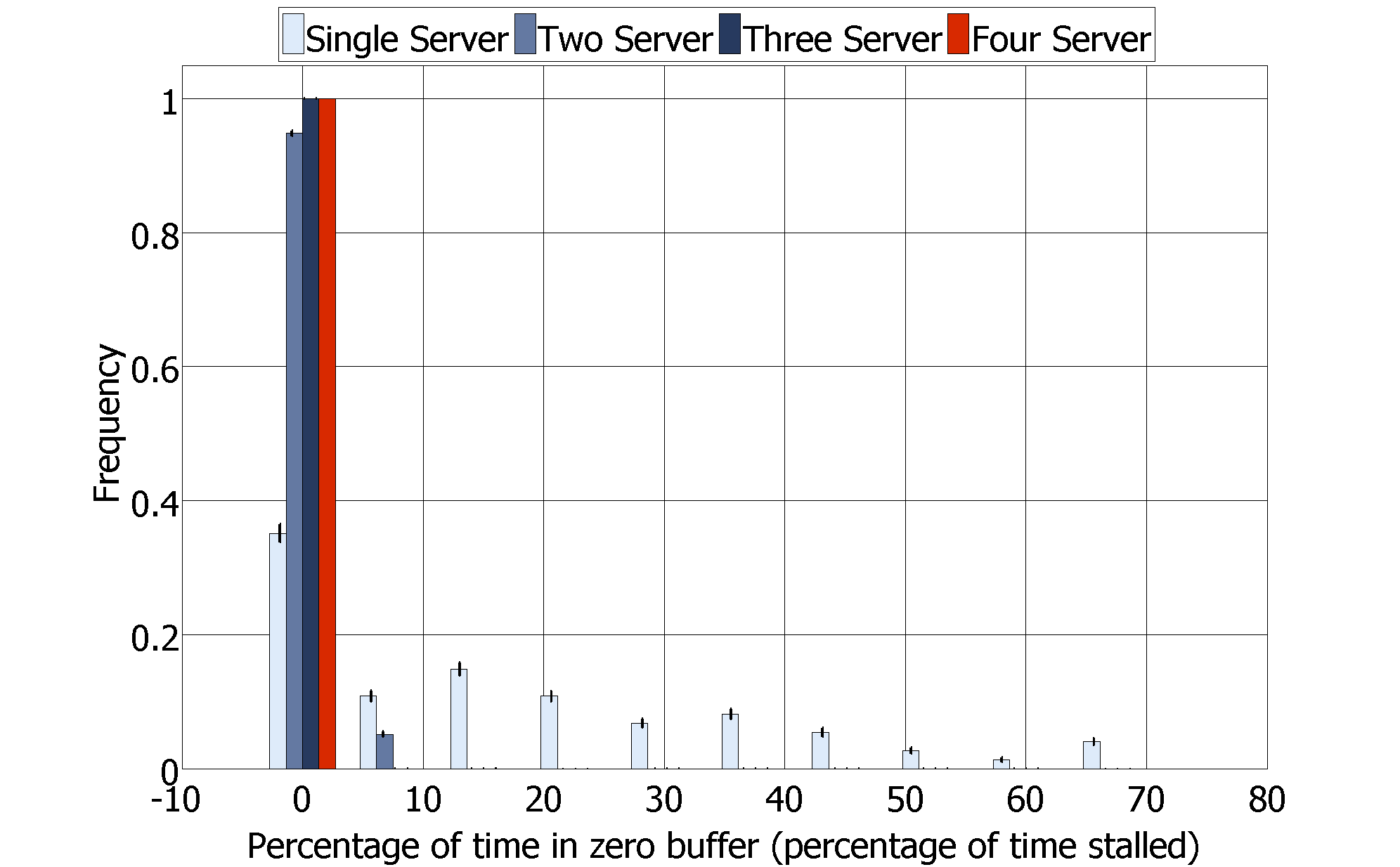}
\vspace{-1mm}
     \caption{Distribution of the fraction of time clients stalled in each run. \label{fig:stallsSinglePath}}
\vspace{-1mm}
\end{subfigure}%
\begin{subfigure}{.5\textwidth}
  \centering
\includegraphics[width=.75\linewidth]{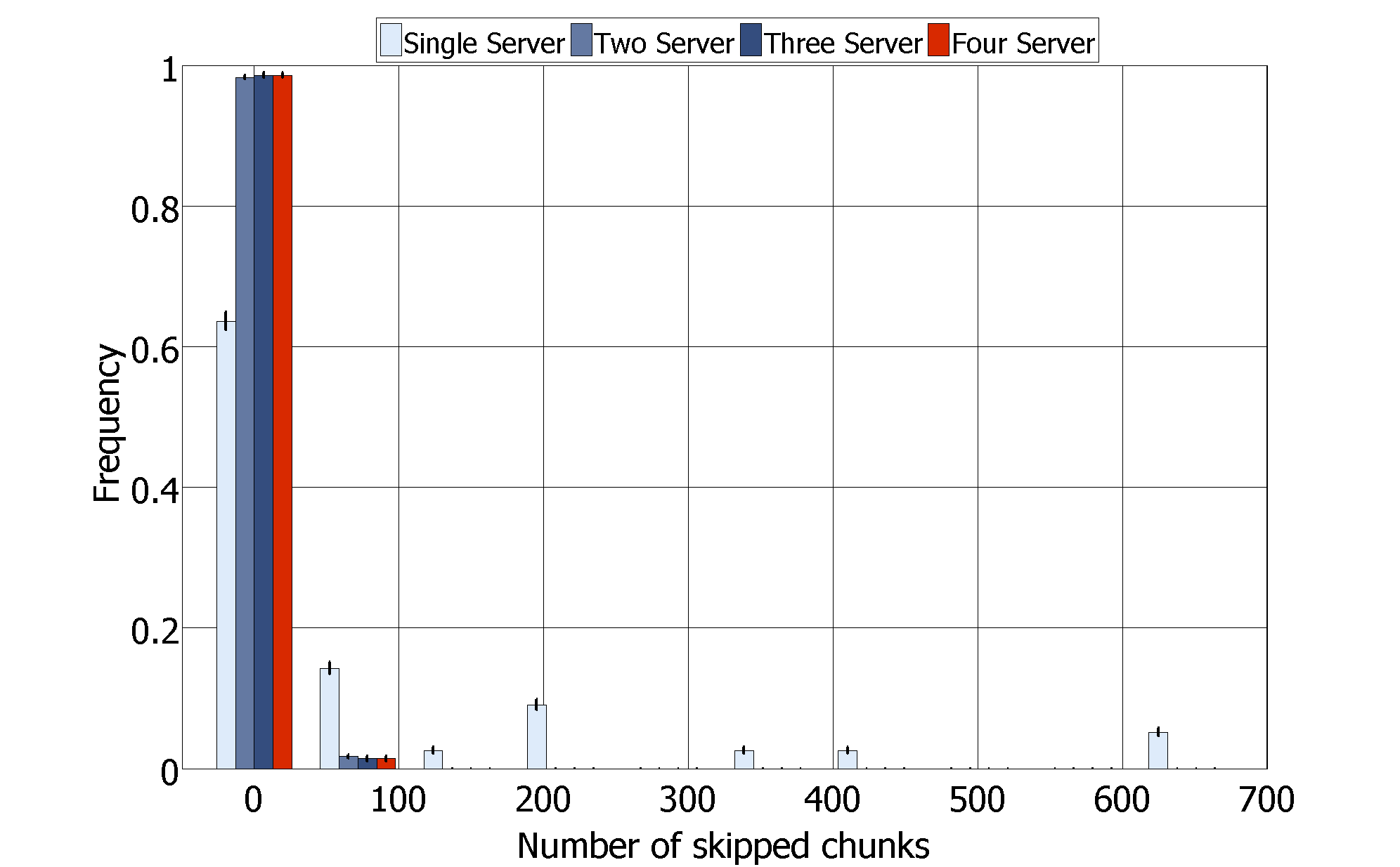}
\vspace{-1mm}
     \caption{Distribution of the number of chunks skipped in each run. \label{fig:skippedSinglePath}}
\vspace{-1mm}
\end{subfigure}
\vspace{-1mm}
\caption{Comparison to single server. {\bf High} and {\bf Medium}
  $C_i$, smooth bandwidth variations.} 
\vspace{-1mm}
\end{figure*}

The first is \emph{Dealing with Infeasibility.} As network bandwidth
fluctuates, the optimization may not always be feasible. However, we
still want the client to request the best possible transmission
rate\footnote{Alternatively, this may serve as an ABR trigger to
  switch to a lower rate.} (the closest rate to $T$). For that purpose,
we progressively decrease the target rate in the \Sys optimization
(by~$10\%$ in our experiments) until a feasible solution is
found. Once a feasible solution is found, \Sys attempts to bounce back
to its original target as quickly as possible by increasing its target
rate by $30\%$ in each subsequent round (while the solution remains
feasible) until it can again maintain its original target rate.


The second issue concerns \emph{Breaking ties.} The optimization needs
not have a unique solution, \eg with homogeneous paths with sufficient
bandwidth, using any of them is a feasible and optimal solution. We,
therefore, add two tie-breaking criteria to the optimization: (1) we
minimize the number of servers used, and (2) we favor those that have
been used more frequently in the past. Both criteria aim to reduce the
number of out-of-order chunks.

We show in Section~\ref{sec:exec} that the optimization of
\Eqref{scheduler} is practical in that it can be solved for up to $4$
servers in around $1$ms without overloading an entry-level
machine.
We also note that it is possible to improve the
optimization run time by re-using past solutions as a starting point
in each round~\cite{bertsimas}.  
Such improvements are, however, beyond the scope of this
paper.


\subsection{Bandwidth Estimation}
\label{sec:core}
It is important for the \Sys client to utilize the available
bandwidth effectively and not exacerbate congestion if/when it
happens. Thus, in our design of the \Sys client, we maintain an
estimate of the bandwidth available on each path using a TCP-like
mechanism, which computes a window size $w_i$ (in chunks) for each
path~$i$. 
This window provides an upper bound for the number of requests that
can be sent to a server. Specifically, $w_i$ is computed using a TCP
CUBIC congestion control mechanism.  We chose
CUBIC as our bandwidth estimation method as the congestion window
changes are solely dependent on the time of the last congestion event
and not the latency to the server (this is important as we use
multiple servers in the client and there might be large discrepancies
in the RTT to each of these servers).  Note that CUBIC relies on
losses to reduce its window size, and since chunks are requested over
{\tt http}, there is no application level loss in the \Sys client. We
use a drop of more than $20\%$ in the estimated average rate
$\widehat{R}_i$ to server~$i$ as equivalent to a CUBIC packet
loss. $\widehat{R}_i$ is computed using an exponential average of
request completion times (this acts as a low pass filter and
eliminates estimation noise). 
The $20\%$~threshold was chosen based on the best empirical
performance observed across a sweep of the possible values.

\subsection{The Request Manager}
\label{sec:manager}
As described earlier, the \Sys client is an application layer
protocol. Once the \Sys scheduler computes an allocation of requests
it informs the request manager of the new allocation. The request
manager maintains a list of chunks for which a request has not been
sent to any of the servers. It uses a multi-threaded HTTP connection
pool with open connections to each of the servers to request the
appropriate number of chunks from this list from each server.
Once responses arrive, the request manager ensures that it is placed in
the appropriate place in the playback buffer.

\subsection{Performance Optimizations}

The \Sys client incorporates optimizations previously devised for
MPTCP. In particular, it employs opportunistic retransmit. When
latencies across servers are significantly different, opportunistic
retransmit prevents the client from stalling while waiting for a chunk
requested from a high latency server.  A mechanism similar to TCP
time-outs is also implemented for chunk requests. If a response is not
received before a time-out, the request is re-sent to another server.
Last, we limit the number of retries for chunk requests to bound
playback stalls.  Once this retry limit is exhausted, the chunk is
skipped and the player relies on its codec to mask missing
frames. This decision is implemented as part of \Sys's request
manager.

Note that while, as mentioned earlier, \Sys is both complementary to
ABR and should be able to integrate with an ABR codec, the current
implementation is limited to fixed rate codecs.
Extending the design to make it fully compatible with ABR codecs is
part of future work.

\subsection{Discussion of Alternative Design Choices And Trade-offs}
\label{sec:alternative}

We acknowledge that there are a number of alternatives to our design
of \Sys. These include:

\noindent{\textbf{Fixed vs adaptive chunk sizes.}} Our design
uses fixed sized chunks and adjusts rates by varying the number of
chunks requested.  An alternative would be to keep the number of
chunks fixed and vary the chunk size (this is the
approach of~\cite{youtuber}).
We opted for the former approach as empirical evaluations with
both approaches showed that it exhibited better performance given our
goal of rate stability (See~\S\ref{sec:eval}).

\noindent{\textbf{Application vs transport layer solution.}}  An
alternative to an application layer solution is a transport layer one,
\ie by extending a protocol such as MPTCP. A transport layer solution
has advantages such as finer adaptation granularity and, therefore,
faster reactions. However, extending MPTCP to work in a multi-server
rather than single-server setting involves non-trivial changes to the
network protocol stack.  In addition, the tight coupling of MPTCP
components can result in unexpected interactions~\cite{arzani14}.
Avoiding or predicting them calls for a careful evaluation of proposed
changes. Hence, an application layer solution offers a number of
benefits, in terms of flexibility and ease of deployment.

\Section{Performance Evaluation}
\label{sec:eval}


\begin{figure}[htb!]
  \centering
\includegraphics[width=0.9\linewidth]{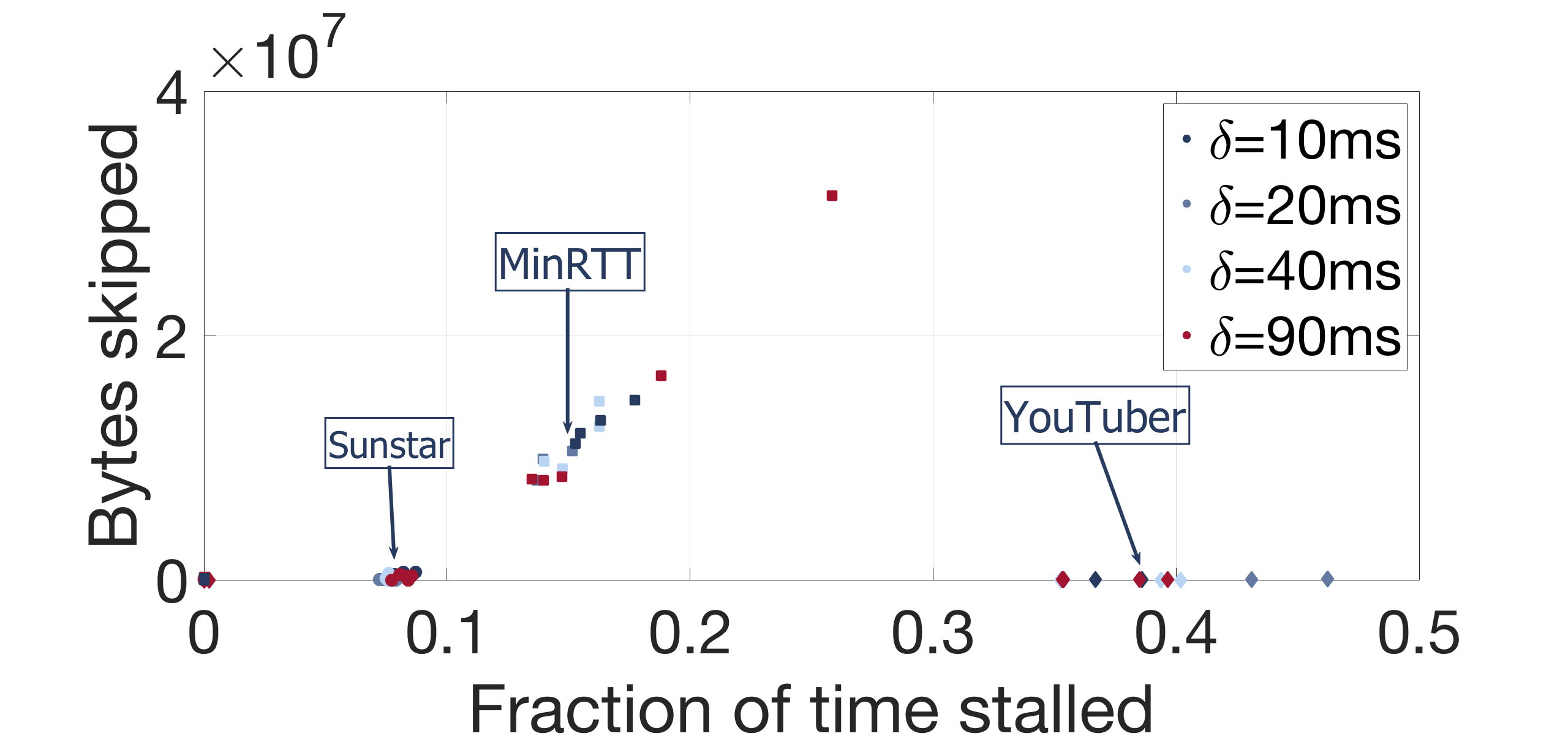}
     \caption{Impact of latency - Two servers, {\bf Medium}
       $C_i$, bursty bandwidth variations ($\delta=$ difference in
       latency).\label{fig:mismatched}}  
\end{figure}

We implemented a prototype of the \Sys client and in this section
report on its evaluation with respect to performance, namely,
(1) improvements in video quality over a single server client; (2)
improvements in video quality compared to two representative
multi-server schedulers: an RTT-based scheduler\footnote{We refer to
  it in the paper as the {\em Min-RTT} scheduler, and selected it as
  representative of schedulers that focus on performance, \ie by
  prioritizing downloads from the ``closest'' servers.} similar to
that used (at the transport layer) by MPTCP, and the YouTuber
scheduler\footnote{As previous results exist for YouTuber, its
  comparison to \Sys includes a few more scenarios than for the
  Min-RTT scheduler.}  of~\cite{youtuber} that works by estimating the
bandwidth available to each server, and requesting chunks at a rate
equal to that estimate; (3) its run-time performance, in particular,
that of its core optimization routine.  The other key aspect of the
\Sys design, \ie cost, is evaluated in~\S\ref{sec:cost}.

\begin{figure*}[htb!]
\centering
\begin{subfigure}{.47\textwidth}
  \centering
   \includegraphics[width=.8\linewidth]{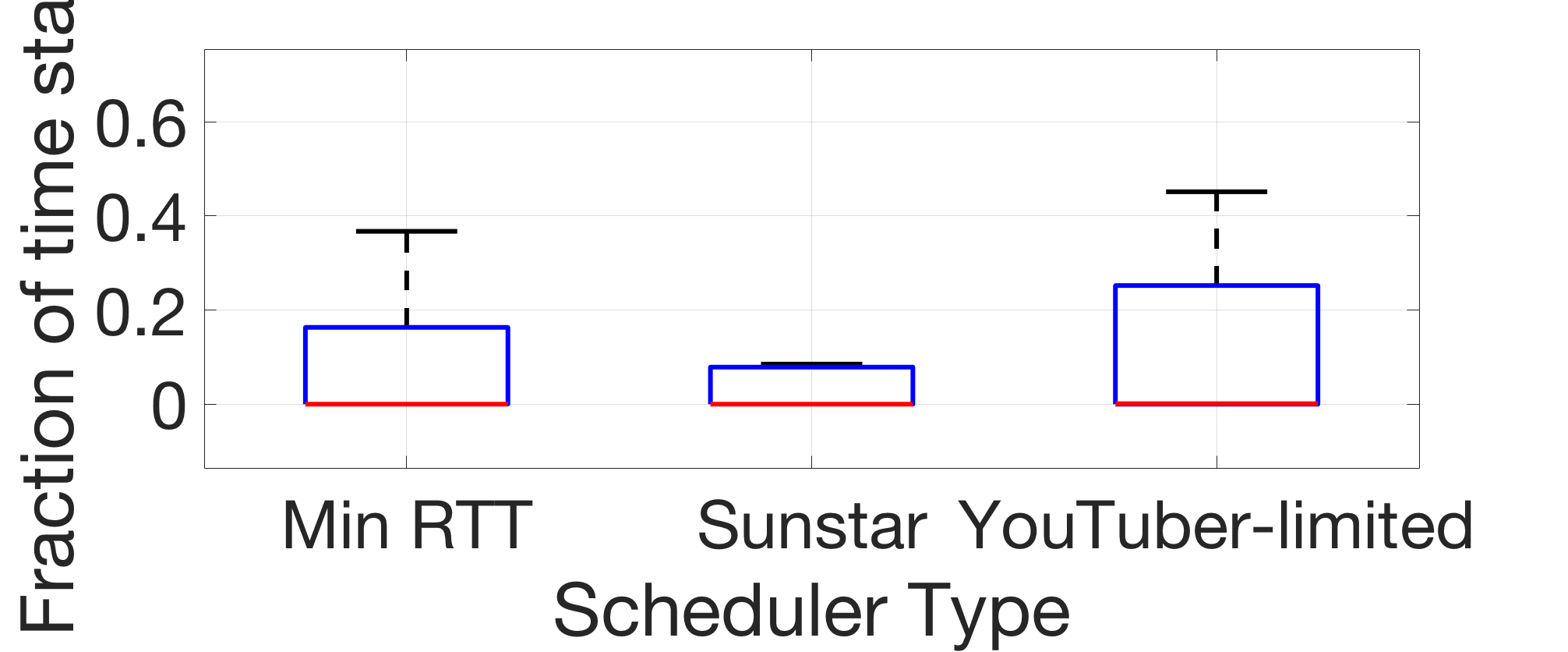}
\vspace{-1mm}
     \caption{Fraction of time stalled. \label{fig:LiveStreaming}}
\vspace{-1mm}
\end{subfigure}
\begin{subfigure}{.47\textwidth}
  \centering
\includegraphics[width=.8\linewidth]{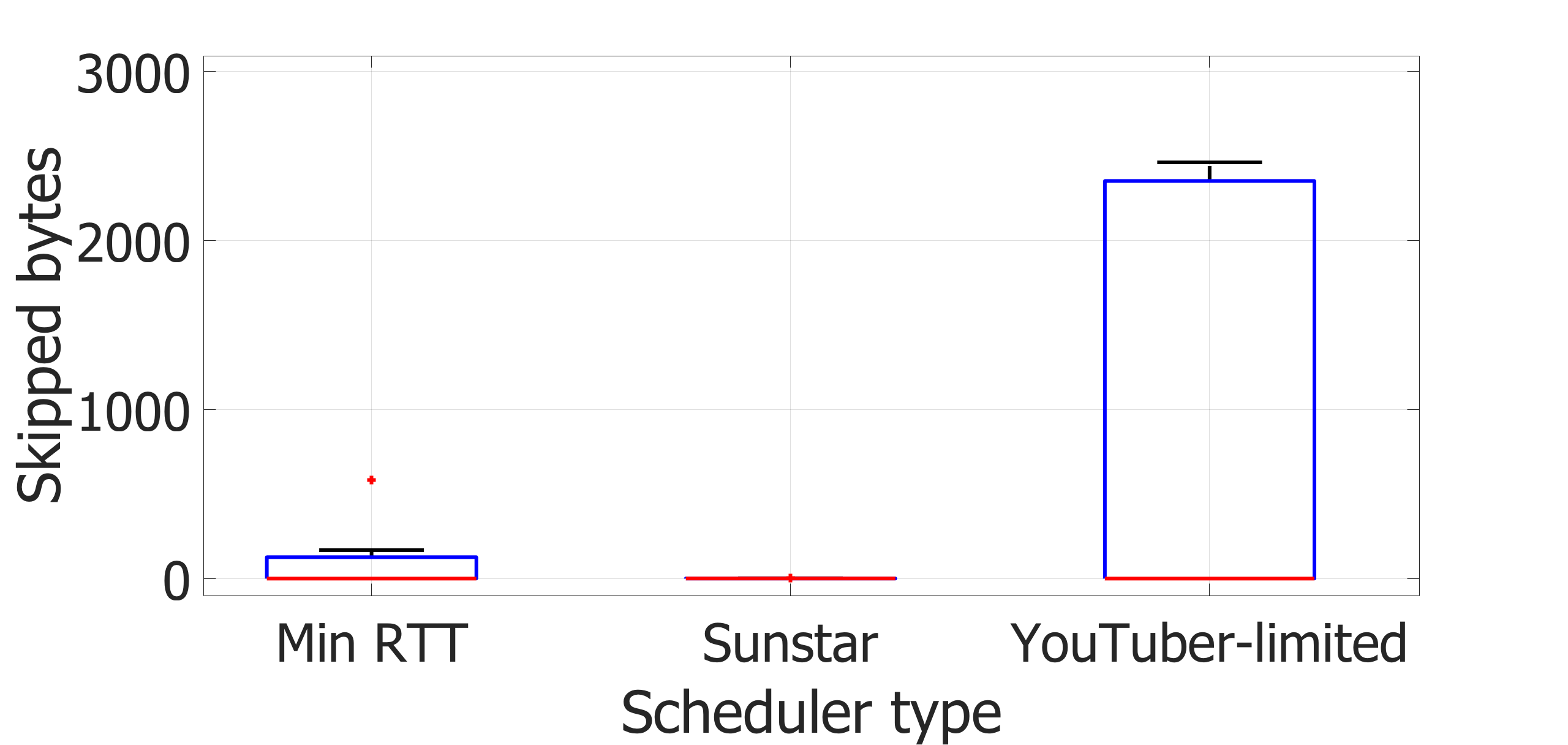}
\vspace{-1mm}
     \caption{Number of skipped bytes. \label{fig:youtuberskippedlive}}
\vspace{-1mm}
\end{subfigure}
\vspace{-1mm}
\caption{Live streaming schedulers comparison - {\bf Medium}
  $C_i$, bursty bandwidth variations.}
\vspace{-1mm}
\end{figure*}


\Subsection{Evaluation Setup}

Our evaluation setup is similar to that of \S\ref{sec:mums}, but
spans a broader set of scenarios to offer a more a comprehensive
evaluation of \Sys's performance.

Our \Sys prototype is a Linux-based MuMS client that retrieves video
from Nginx web servers. The videos are $250$~MB and broken into fixed
chunks of $102$~KB (results with slightly different chunk sizes and
with Apache servers were similar). The codecs are fixed rate codecs
with a target rate $T=4.08$~Mbps (experiments with $T=360$~Kbps
yielded similar results).  The scheduler's epoch is $10$~ms.

\begin{figure}[hbt!]
\centering
   \includegraphics[width=0.75\linewidth]{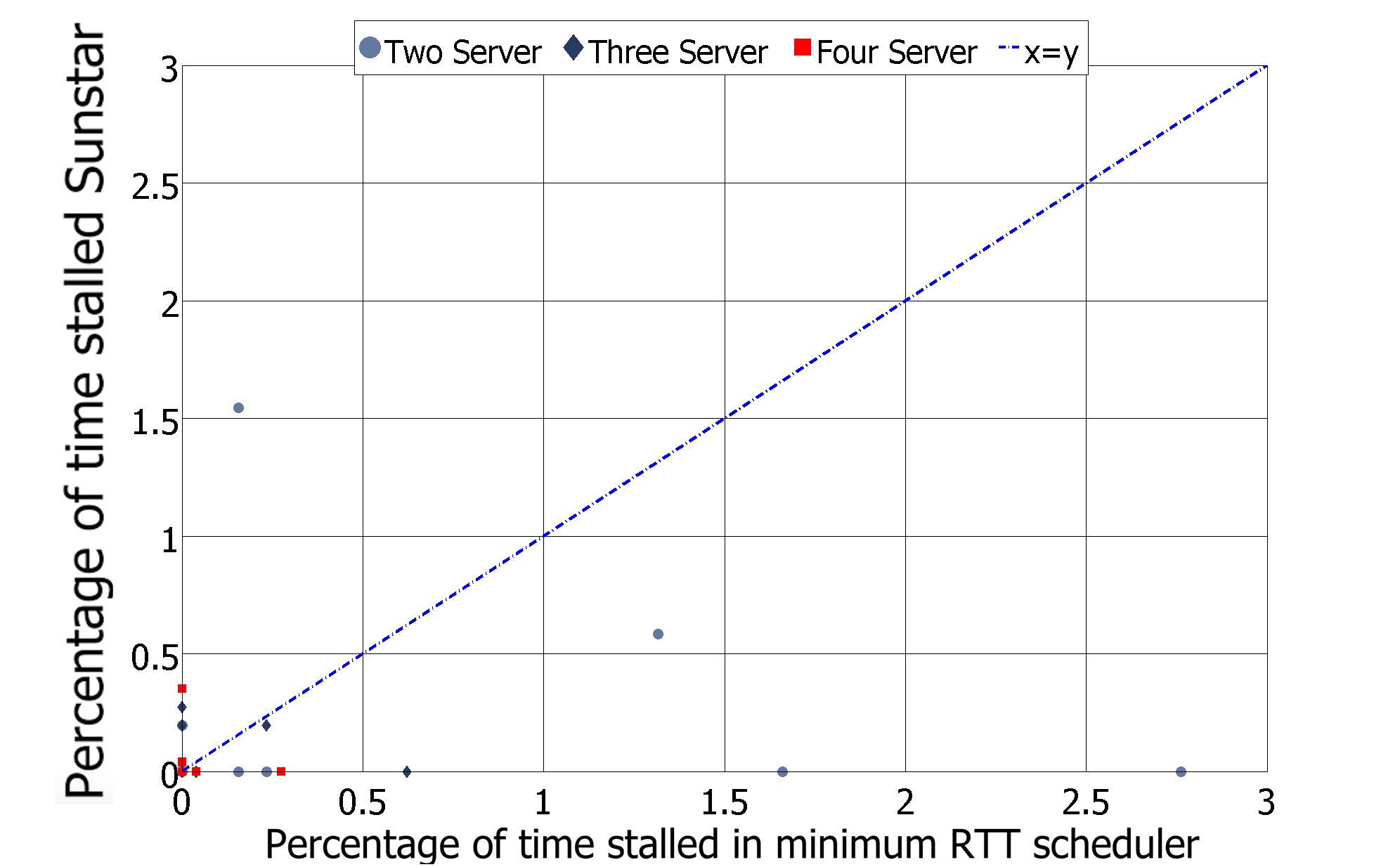}
     \caption{Min-RTT vs.~\Sys schedulers - {\bf High} $C_i$, smooth
       bandwidth variations.\label{stallsMultipathhigh}} 
\end{figure}
\begin{figure}[hbt!]
  \centering
\includegraphics[width=0.75\linewidth]{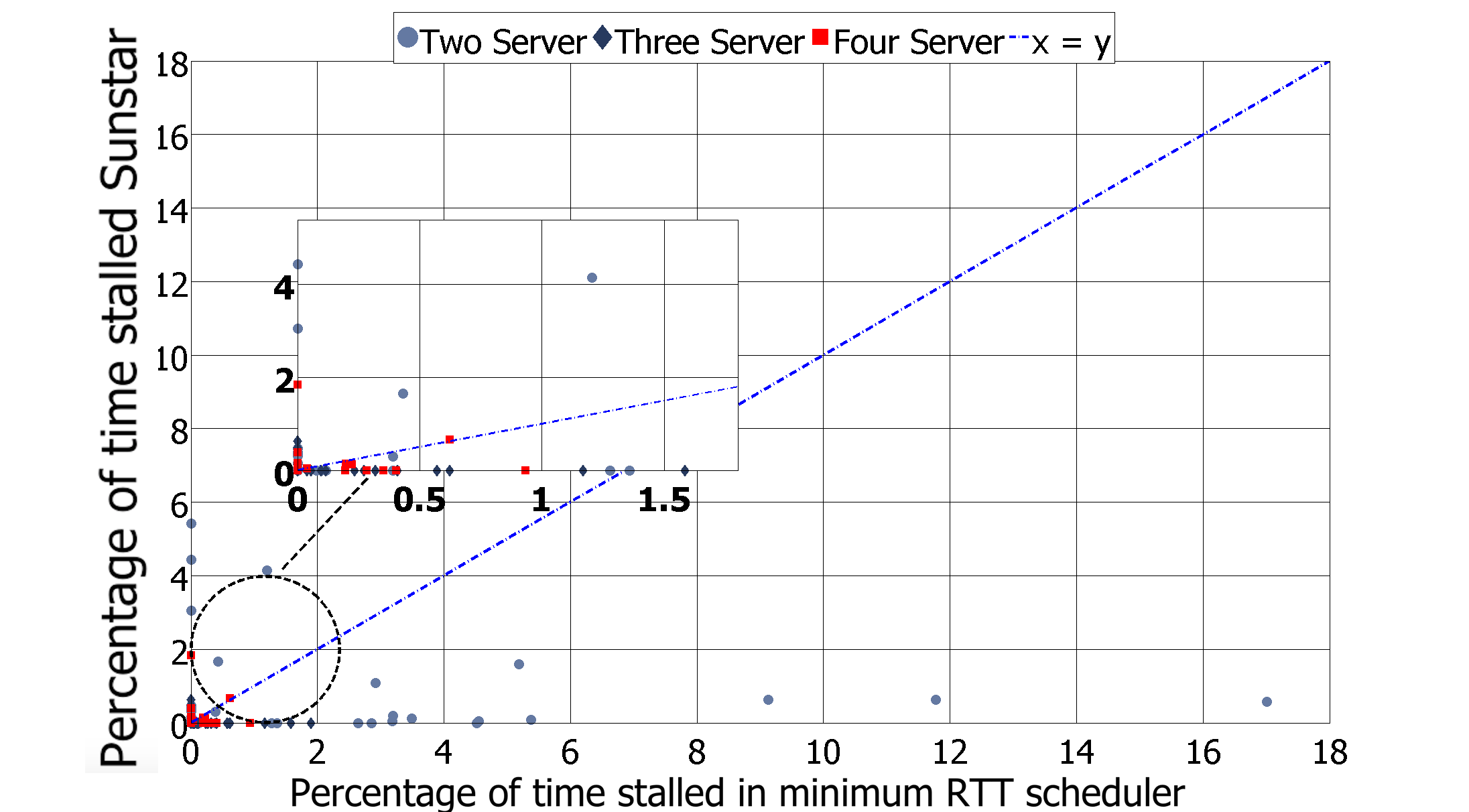}
     \caption{Min-RTT vs.~\Sys schedulers - {\bf Medium} $C_i$, smooth
       bandwidth variations.\label{stallsMultipathMed}}
\end{figure}

As before, experiments are carried out on the Emulab testbed. Clients
connect to each server via a dedicated ``path'' consisting of a link
connecting the client to a machine running dummynet, followed by a
link connecting that machine to the server.  The dummynet machine on
path~$i$ is used to add a fixed latency of~$20$ms and vary the
capacity available to the client on path~$i$ from~$0$ to~$C_i$ with an
average value of $C_i/2$. Those variations seek to capture the impact
of interfering traffic on the bandwidth available between clients and
servers.  We consider two types of bandwidth variations {\em smooth}
and {\em bursty}.  Under smooth variations, the available bandwidth
increases and decreases progressively in fixed and small sized steps
(we use a number of different step sizes for each experiment).  This
seeks to mimic the progressive bandwidth fluctuations that arise from
client arrivals and departures.  In contrast, bursty bandwidth
variations are based on large sporadic changes in available bandwidth
that represent abrupt changes in congestion, \eg because of the start
of a high-bandwidth download on a shared link or the start of a live
streaming event. 

We experiment with three configurations: (i) \textbf{High $C_i$},
where $T \ll C_i/2$. This captures the bandwidth
conditions observed in non-peak viewing times as well as that observed
by clients connecting to over-provisioned CDNs. (ii) \textbf{Medium
  $C_i$} where $T + 1$~Mbps $\le C_i/2 \le T + 2$~Mbps. This scenario
is more typical of today's ecosystem where CDN bandwidth provisioning
and server capacity are ``just enough" for the expected workload. In
such a setting, the bandwidth available to the client occasionally
falls short of its download rate, thus, negatively impacting its
playback experience. (iii) \textbf{Low $C_i$} where $C_i/2 < T$. This
is a scenario where the CDN and/or servers are oversubscribed, \eg
when a live streaming event is more popular than predicted. 
In the {\bf Low} $C_i$ scenario, multiple paths (servers), are necessary just to meet the target rate.



\noindent {\bf Performance metrics.} As in Section~\ref{sec:mums}, QoE
is measured based on~\cite{QoE1,QoE2,QoE3}: (1) the number of skipped
chunks; and (2) the fraction of times clients are stalled.

\Subsection{Comparison to Single Server Clients}

\begin{figure}[hbt!]
  \centering
   \includegraphics[width=0.8\linewidth]{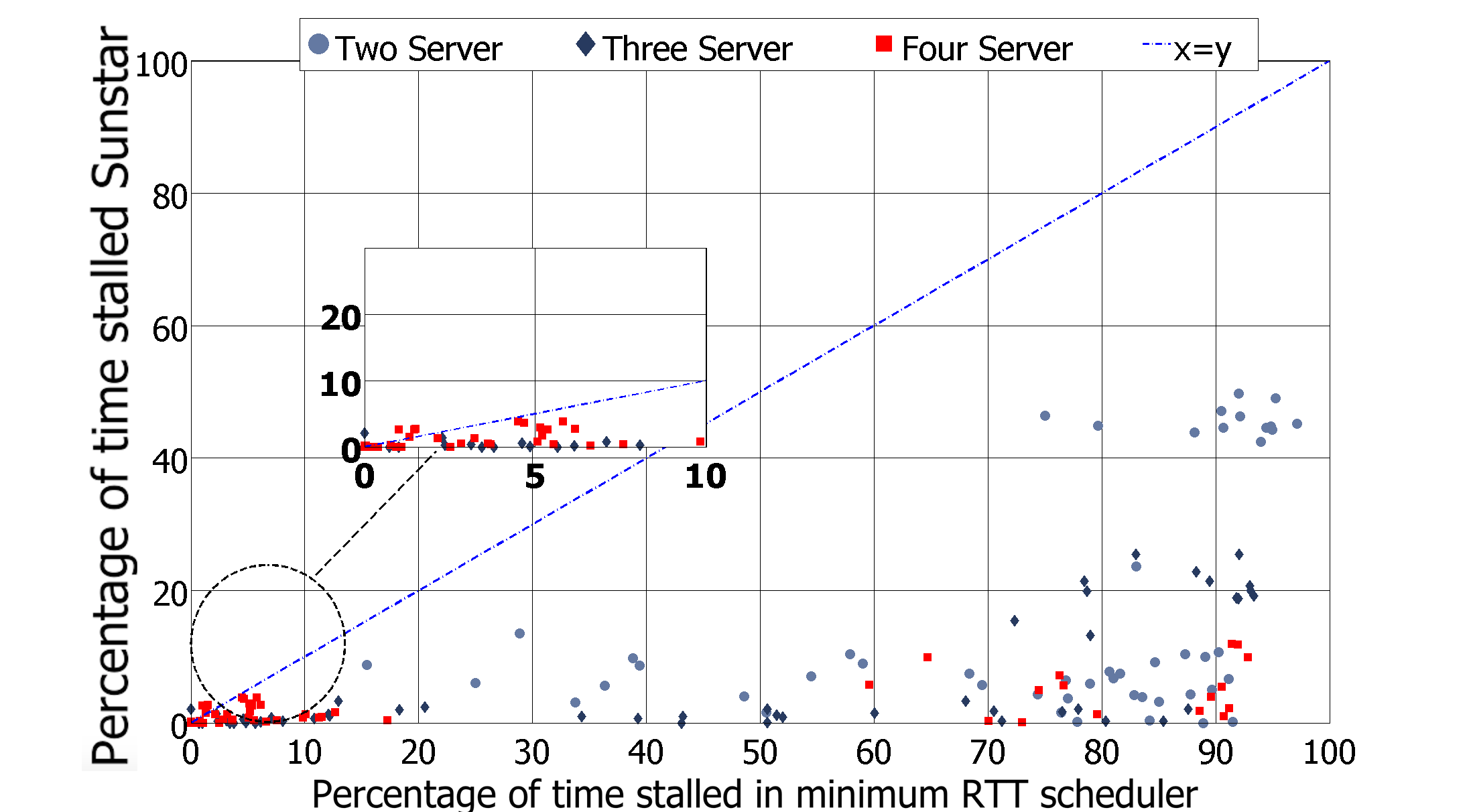}
     \caption{Min-RTT vs.~\Sys schedulers - {\bf Low} $C_i$, smooth
       bandwidth variations.\label{stallsMultipathLow}}
\end{figure}%
\begin{figure}[hbt!]
\centering
  \centering
   \includegraphics[width=0.70\linewidth]{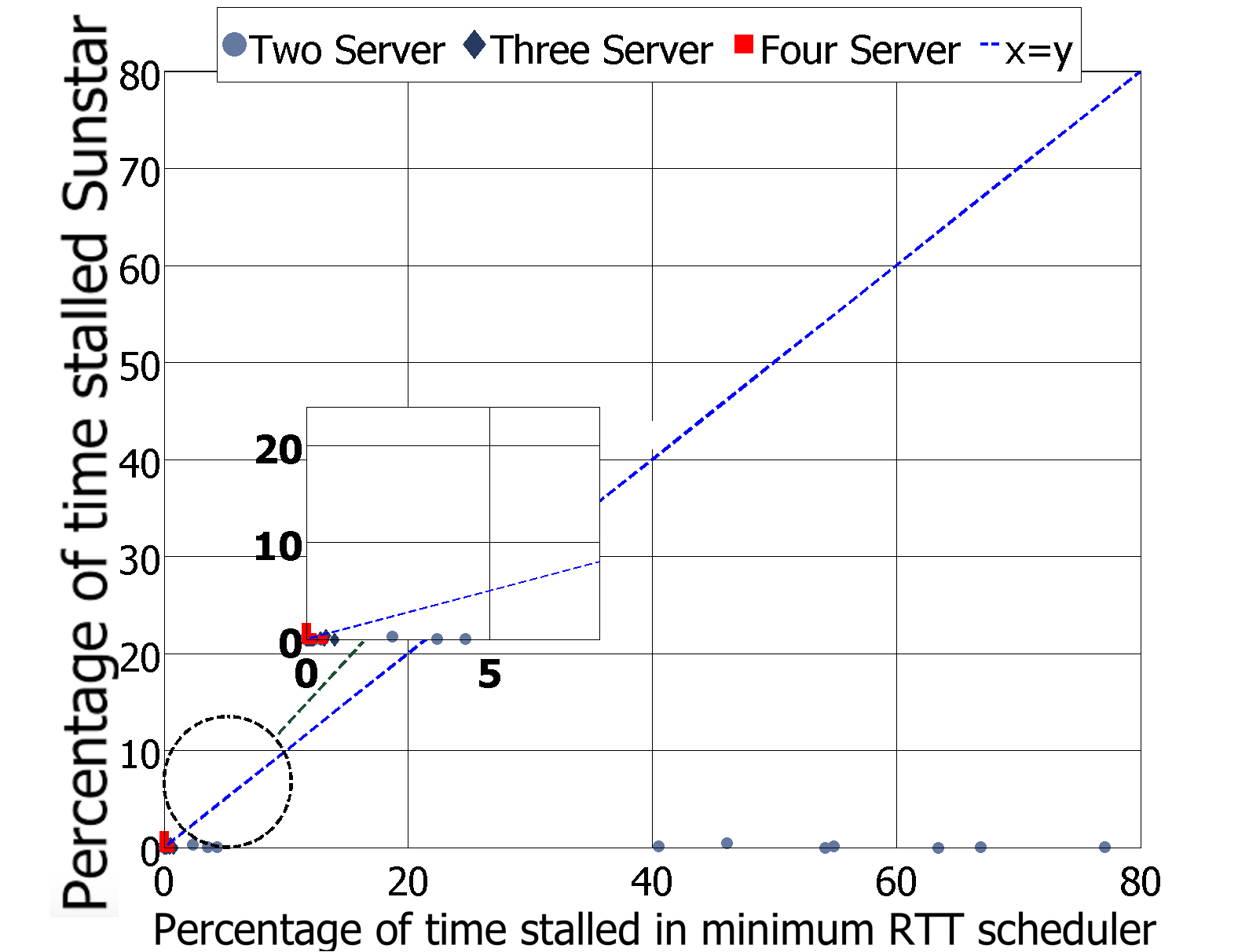}
     \caption{Min-RTT vs.~\Sys schedulers - {\bf Medium} $C_i$, bursty
       bandwidth variations.\label{fig:bursty}}
\end{figure}
We first compare the performance of the \Sys client to that of a
single server client (the typical streaming configuration today). This
repeats the earlier MuMS validation of \S\ref{sec:mums}, but now using
the \Sys scheduler instead of the Min-RTT scheduler.  Only the
\textbf{High} and \textbf{Medium $C_i$} configurations are considered,
since they are the only two for which an individual path has enough
(average) bandwidth. Experiments with smooth and bursty variations
yielded qualitatively similar outcomes, and we therefore report only
the former.
Statistics for stall durations and number of skipped chunks for
\textbf{High} and \textbf{Medium $C_i$} are combined and shown in
\figs{fig:stallsSinglePath}{fig:skippedSinglePath},
respectively\footnote{Error bars show the $95^{\mbox{th}}$ percentile
confidence intervals assuming Bernoulli distributed samples.}.  The
figures confirm the results of \S\ref{sec:mums}, but now for the
\Sys client.  A casual comparison of \fig{fig:skippedSinglePathMPTCP}
and \fig{fig:skippedSinglePath}, also hints at the \Sys scheduler
out-performing the Min-RTT scheduler.
We explore this aspect next.




\begin{figure*}[hbt!]
\centering
\begin{subfigure}{.5\textwidth}
  \centering
   \includegraphics[width=.8\linewidth]{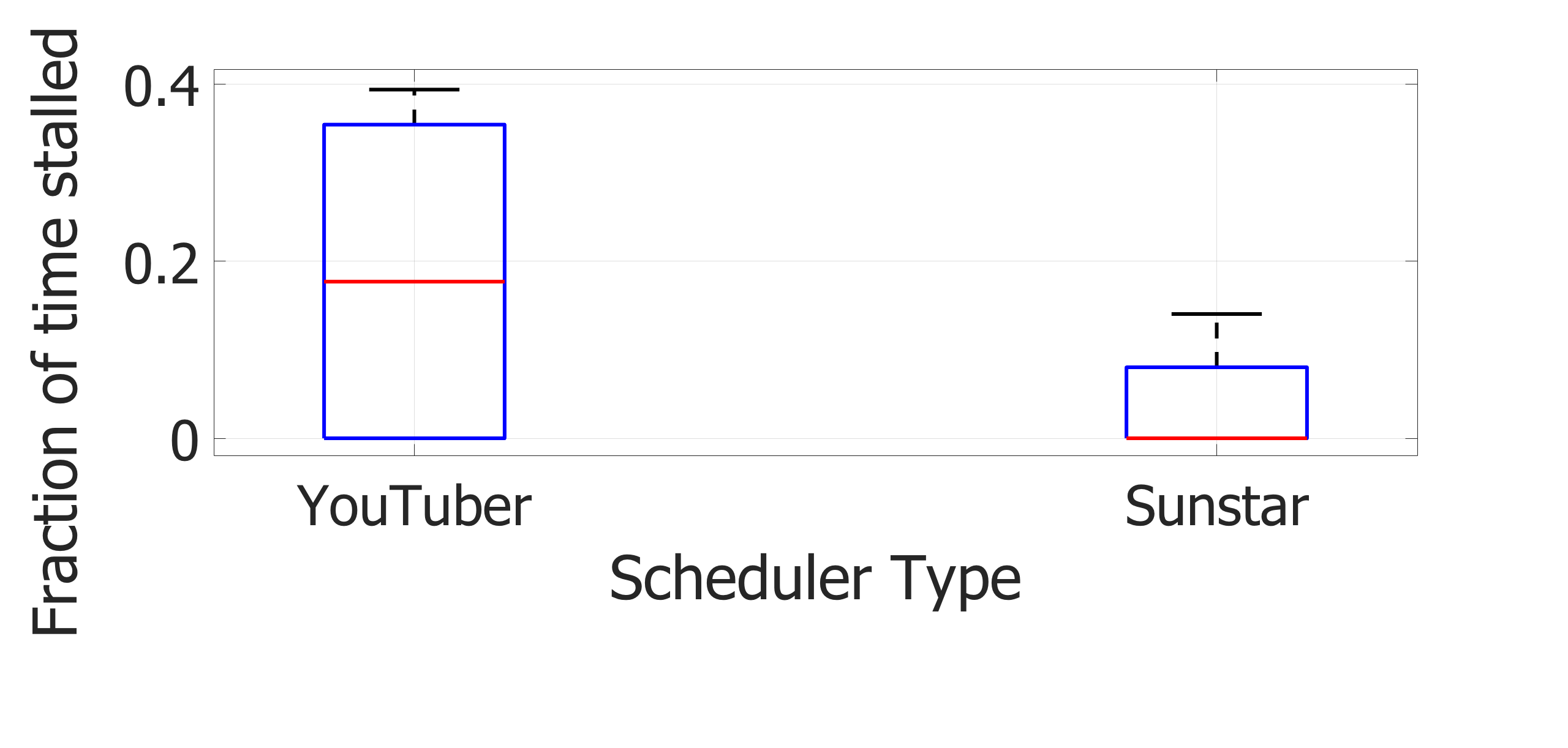}
     \caption{Fraction of time stalled.\label{fig:youtuberfractionstalled}}
\end{subfigure}%
\begin{subfigure}{.5\textwidth}
  \centering
\includegraphics[width=.8\linewidth]{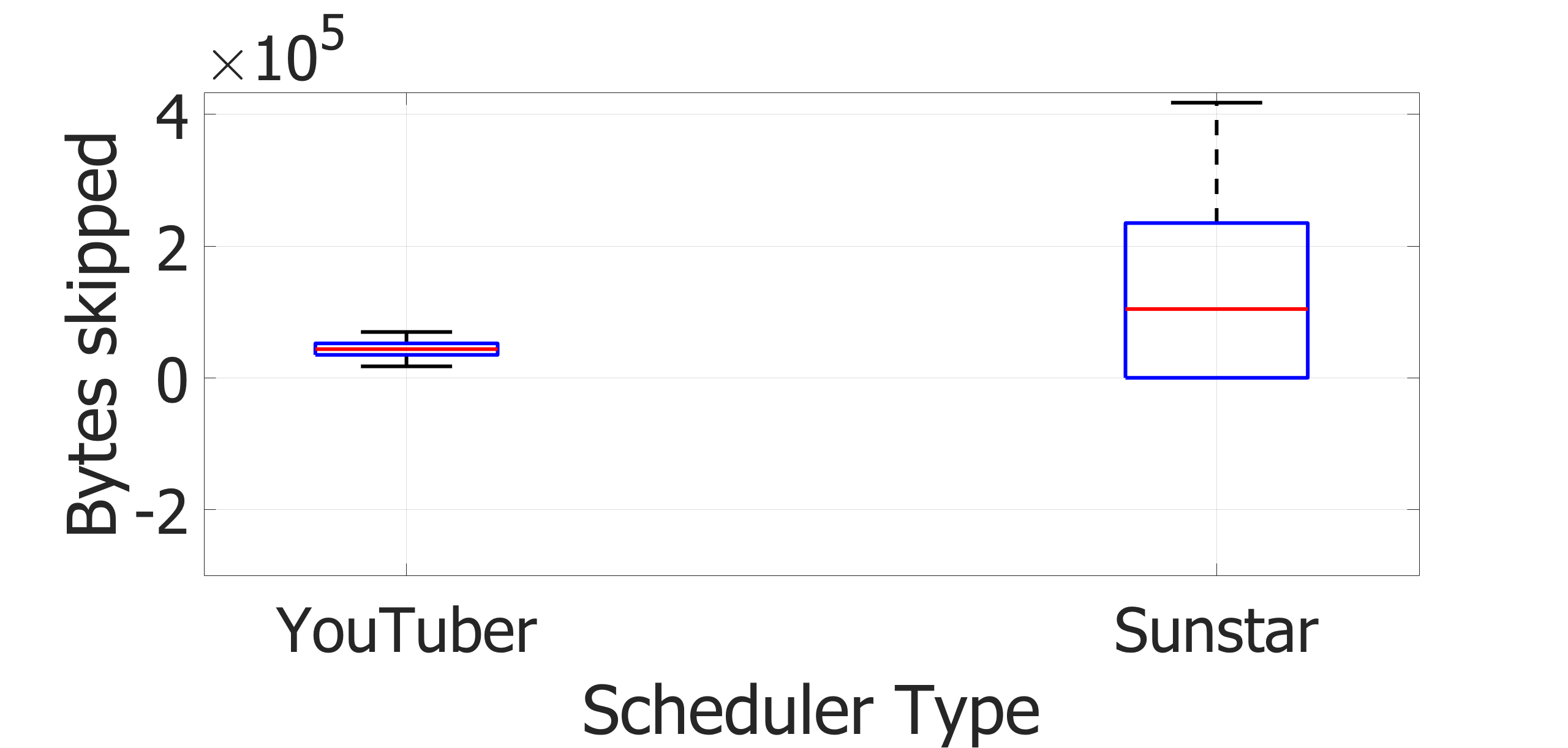}
     \caption{Number of skipped bytes. \label{fig:youtuberskipped}}
\end{subfigure}
\caption{\label{fig:youtuber_vs_sunstar}YouTuber vs.~\Sys schedulers - {\bf Medium}
  $C_i$, bursty bandwidth variations.}
\end{figure*}

\Subsection{Comparison to the Min-RTT scheduler}

We focus on a configuration where a single client downloads videos
from a given set of servers (from 2 to 4). 
The client uses either the \Sys scheduler or the Min-RTT
scheduler. Due to the high variance in the performance of the Min-RTT
scheduler, we report stall statistics\footnote{Results for skipped
  chunks statistics were of a similar nature.} in the form of scatter
plots (the large confidence intervals for the Min-RTT scheduler make
the results hard to interpret).  The $x$-coordinate of each data point
corresponds to the Min-RTT scheduler and the $y$-coordinate to the
\Sys scheduler.  Points below the $x=y$ line, therefore, indicate
better performance for \Sys.  Results are shown in
\figsm{stallsMultipathhigh}{stallsMultipathLow} for the {\bf
  High, Medium} and {\bf Low} $C_i$ configurations under smooth link
variations. \fig{fig:bursty} presents results for one representative
configuration with bursty bandwidth variations, namely, the {\bf
  Medium} $C_i$ configuration.


The figures show that
\Sys consistently outperforms the Min-RTT scheduler
irrespective of the number of servers used.  The biggest
improvements arise in the \textbf{Low $C_i$} scenarios, where the
limited resources amplify the need for judicious scheduling.
The \textbf{Medium $C_i$} scenario still sees \Sys
outperforming the Min-RTT scheduler, while the differences are
less pronounced in the \textbf{High $C_i$} scenario because the
plentiful resources ensure that both schedulers perform well.

\Subsection{Comparison to The YouTuber Scheduler}
The next set of experiments compare stalls and skips statistics of the
\Sys and YouTuber schedulers, again for scenarios where clients
download from a given set of servers.  As the MinRTT scheduler, the
YouTuber scheduler focuses on maximizing client performance. It
achieves this objective by matching its download rate to the available
bandwidth.  Since the original design of~\cite{youtuber} only
considers two servers, we limit our comparison to this scenario.  In
addition, because the YouTuber scheduler uses variable sized chunks,
we compare the number of skipped bytes instead of the number of
skipped chunks. YouTuber also lacks an explicit time-out mechanism. To
avoid penalizing it we, therefore, add a time-out period of $10s$ to
its operation.  When a stalls exceeds the time-out, $16$KB of data
(the minimum chunk size) is skipped.  Finally, In reporting the
results, we focus on {\bf Medium} $C_i$ configurations that are more
representative, and bursty bandwidth variations as rapid changes are
expected to stress both schedulers' adaptability\footnote{In
  configurations with smooth bandwidth variations \Sys and YouTuber
  displayed comparable performance.}.

\figs{fig:youtuberfractionstalled}{fig:youtuberskipped} report stalls
and skips statistics, respectively. The results are in the form of
boxplots showing the results for each metric separately.
The box boundaries correspond to the $25^{\mbox{th}}$ and
$75^{\mbox{th}}$ percentiles, with the median as the line inside the
box. The whiskers correspond to outliers.  

The results indicate that under bursty bandwidth variations, the \Sys
scheduler outperforms the YouTuber scheduler in terms of stall
statistics, but is slightly worse when it comes to skips statistics.
This latter difference is mostly because the YouTuber's original
design did not provide for time-outs, and the modification we applied
relies on a relatively large (10s) time-out.  This choice was
motivated by the fact that YouTuber relies on variable size chunks and
can at time request very large chunks.  A low time-out value would
then have often resulted in unnecessary skips for those very large
chunks.  A 10s time-out avoided this problem and limited the number of
skips that YouTuber incurred, but at the cost of a higher frequency of
stalls as seen in \fig{fig:youtuberfractionstalled}.  A complete
solution to the problem likely calls for an adaptive time-out
mechanism based on chunk size.  However, designing and implementing
such a solution is beyond the effort we could invest towards extending
the YouTuber design.  The results of \fig{fig:youtuber_vs_sunstar}
offer a representative sample of how the YouTuber and \Sys schedulers
compare in terms of performance, with both typically offering better
performance than, say, the MinRTT scheduler.
However, as we shall see in \S\ref{sec:cost}, \Sys's benefit also
extend to affording those performance improvements without impacting
the (peering) costs of the video provider.







\Subsection{Impact of Latency Heterogeneity} 
The previous experiments assumed identical propagation delays to all
servers. In this section, we briefly test how heterogeneity in
propagation delays affects the schedulers' efficacy.  For simplicity,
we limit ourselves to a scenario with two servers, {\bf Medium} $C_i$,
and bursty bandwidth variations.  The difference $\delta$ in
propagation delays on the paths to the two servers varies from $10$ms
to $90$ms. The results are reported in \fig{fig:mismatched}.  As the
figure shows, the Min-RTT scheduler exhibits the worst performance,
which degrades as $\delta$ increases.  YouTuber is again successful at
avoiding skips, but this is because of the relatively large time-out we
configured and at the cost of frequent stalls.
\Sys performs well with low stalls and skips statistics and a relative
insensitivity to $\delta$.
\Subsection{Live Streaming} 
Although video downloads represent the bulk of video traffic, live
streaming remains an important service.  It is, therefore, also of
interest to explore the benefits of a MuMS solution for live
streaming.  Note that unlike video downloads, live content is
generated at a constant rate, so that pre-buffering options are
limited. We, therefore, emulate a live streaming experience by first
writing a number of bytes equal to the client's pre-buffering
threshold into a file.  Subsequently, the server continues to write at
a constant rate of $T$ to the file until the entire file is
created. Clients start sending requests after the pre-buffer portion
of the file has been created.  Our experiments compare the Min-RTT,
\Sys, and a modified\footnote{The YouTuber client was not originally
  designed for live streaming, and its assumption of an unlimited
  playback buffer would often result in infeasible requests. Initial
  live streaming experiments with the original YouTuber design
  confirmed its poor performance.  We, therefore, modified it by
  introducing a limited playback buffer that avoided many of those
  problems. In all fairness, there might be better modifications to
  allow it to accommodate live streaming.} YouTuber (YouTuber-limited)
schedulers, for live streaming clients. We focus on a {\bf Medium}
$C_i$ configuration with bursty bandwidth variations.


The results are shown in
\figs{fig:LiveStreaming}{fig:youtuberskippedlive}, which illustrate
that \Sys outperforms the other schedulers in both stalls and
skips. YouTuber-limited has the worst performance, in part because, in
spite of our modifications, its aggressive download strategy results
in unnecessary stalls. Its design aims at maximizing its download rate
during ``on'' periods (when the playback buffer content drops below a
pre-specified threshold).
In live streaming, such an aggressive download strategy can result in
requesting content before it is available.  This results in a stall
while waiting for a retransmission (of the request).  Limiting the
YouTuber buffer size, as we did, allows the client to pace itself (by allowing the playback buffer to fill up and the player to go into ``off" mode), but
is still not entirely successful at eliminating unnecessary stalls.

We repeated the above experiments with larger pre-buffers.
A larger pre-buffer creates a larger ``margin'' to absorb subsequent
rate variations, at the cost of a delayed start in the live stream.
This should benefit all schedulers, but particularly YouTuber, as it
can help mitigate occurrences of stalls.  As expected, performance
improved for all schedulers, but \Sys continued to outperform the
other two.

\Subsection{Scheduler Execution Time}
\label{sec:exec}

Last, we evaluated \Sys's run time performance. Recall
that \Sys's optimization runs every epoch ($10$ms in our
experiments), so that its efficiency matters.

The optimization uses the Mosek Solver~\cite{mosek}, and the Emulab
client machines on which it runs are Dell PowerEdge 2850s with a
single 3GHz processor, 2GB of RAM, and two 10,000 RPM 146GB SCSI
disks~\cite{pc3000}.

Based on our experiments, the run time varies (increases) with both
the number of servers and with the amount of bandwidth available to a
client on each path (both contribute to larger search spaces for the
optimization).  We varied the number of servers from two to four and
the (average) bandwidth available to individual clients on each path
from $3$~Mbps to $10$~Mbps. The fastest average run time was for a two
servers, low bandwidth configuration, for which it was $0.27\pm
0.0022$~ms. The longest average run time was for four
servers in a high bandwidth scenario, where it was $2.20\pm
0.0026$~ms.  In both configurations, the margins are the
$95$~percent confidence intervals.  The run time increased
significantly when using more than four servers, suggesting that
without additional computational optimizations, \eg offloading
computations to the cloud, four servers likely represents a realistic
limit.  We also experimented with increasing the epoch duration, but
while some increases are possible beyond our initial value of $10$ms,
\Sys's performance degrades rapidly as it increases further.  This is
because larger values limit \Sys ability to adapt to bandwidth
variations.

\Section{Effect on Peering Costs}
\label{sec:cost}

\Sys primary motivation is to improve video quality, but do so
without negatively impacting peering costs.  The insight derived from
Section~\ref{sec:mums} led to a design that minimizes rate
variations while keeping the download rate as close as possible to the
minimum feasible rate, \ie the target download rate.
The previous section showed this was effective in improving video
quality.  The focus here is on establishing that those benefits are
realized without increasing the provider's cost.  There are two
separate aspects to the impact on cost.  The first is the effect of
the scheduler on rate variations on individual peering links that
affect the $95^{\mbox{th}}$ percentile.  The second is the influence
of server selection on the load of individual peering links. We
explore both separately in this section.

We evaluate peering costs using a setup similar to that of
Section~\ref{sec:eval}.  The main differences are increases in both
the number of clients simultaneously active, and the number of servers
available to them (we now have $10$ servers to chose from). The latter
allows us to consider the impact of server selection on cost.  While
this section is only concerned with cost, we also evaluated \Sys's
performance and verified that its benefits remain qualitatively
similar to those of Section~\ref{sec:eval}.

\begin{figure}[hbt!]
  \centering
\includegraphics[width=0.8\linewidth]{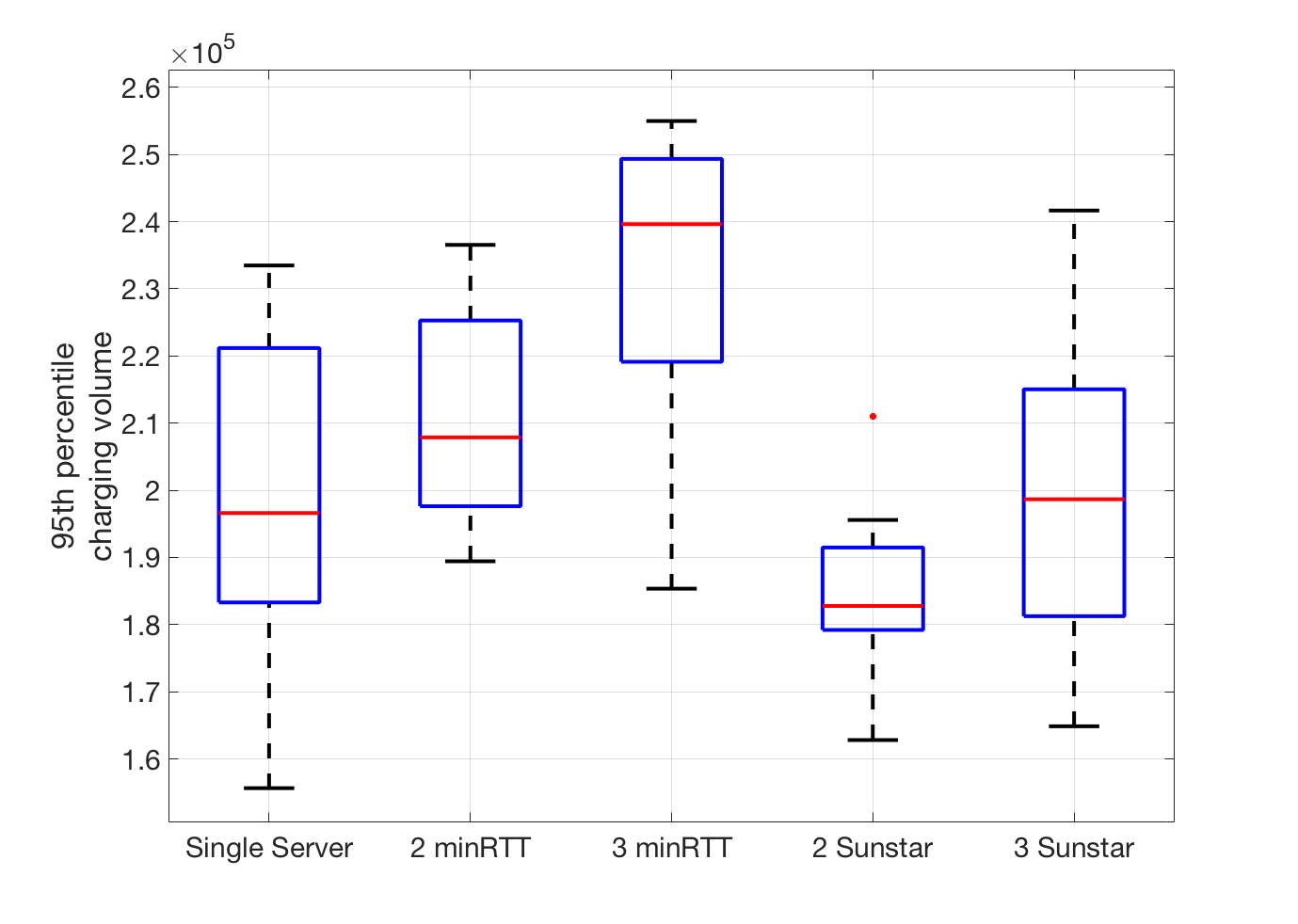}
     \caption{Peering costs under different schedulers using round robin server selection.\label{fig:cost_schedule}}  
\end{figure}

\subsection{Scheduler Impact}

Because the YouTuber scheduler behaves aggressively when it comes to
download rate (it seeks to use as much of the available bandwidth as
possible), we expect (and validated) that it performs poorly when it
comes to cost.  We, therefore, focus our efforts on comparing \Sys to
a single server solution (baseline), and to a client using the Min-RTT
scheduler. 

In a given round of experiments, clients connect to $k$ out of $10$
Emulab servers.  Since Emulab has a limited number of physical
machines available, we configure each machine to have $10$ active
clients at any point in time, for a total of $100$~clients in the
system.
To emulate an environment with clients coming and leaving, clients
watch videos of fixed duration chosen from a set of
$\{5,10,20,30,60\}$ minutes long videos, and then leave to be replaced
by a new client that randomly chooses a new video.  Video selection is
biased towards shorter videos (based on the observations
of~\cite{gill2007youtube}).

Each physical machine has a dedicated link to a dummynet node
through which clients originating on that machine experience bandwidth
variations that are independent of those for clients on other machines.
We use a {\bf Medium} $C_i$ configuration, as described in the
previous section, but scale the bandwidth by a factor $10$ (to account
for the number of clients on the link).  Low and high bandwidth
scenarios yielded qualitatively similar results in terms of cost.
Each server is in turn logically connected to a single peering link
shared by all clients accessing it.  The bandwidth on the peering link
itself is high enough to avoid congestion, independent of the number
of clients assigned to the server.
New clients first connect to a ``master'' server, which redirects them
to a list of $k$ servers, from which to download their video.  In this
section, the master uses a simple round-robin server
assignment strategy to select which $k$~servers ($k \in \{1,2,3\}$) to
assign to a new client.

For comparison purposes, a given experiment uses the same link
bandwidth variation patterns and server assignments for all
schedulers.  An experiment spans $4$~hours, with the provider cost
obtained by summing the individual $95^{\mbox{th}}$ percentiles
peering costs (traffic volumes) of the $10$ servers in those
$4$~hours.  Statistics are then computed over a set of $10$
independent experiments.  
\fig{fig:cost_schedule} reports results for
the following configurations: Single Server, our baseline, Min-RTT with 2 and 3 servers, 
\Sys with 2 and 3 servers.  The definition of the boxplots used in
the figure is similar to that of Section~\ref{sec:eval}.

\begin{figure}[hbt!]
  \centering
\includegraphics[width=0.8\linewidth]{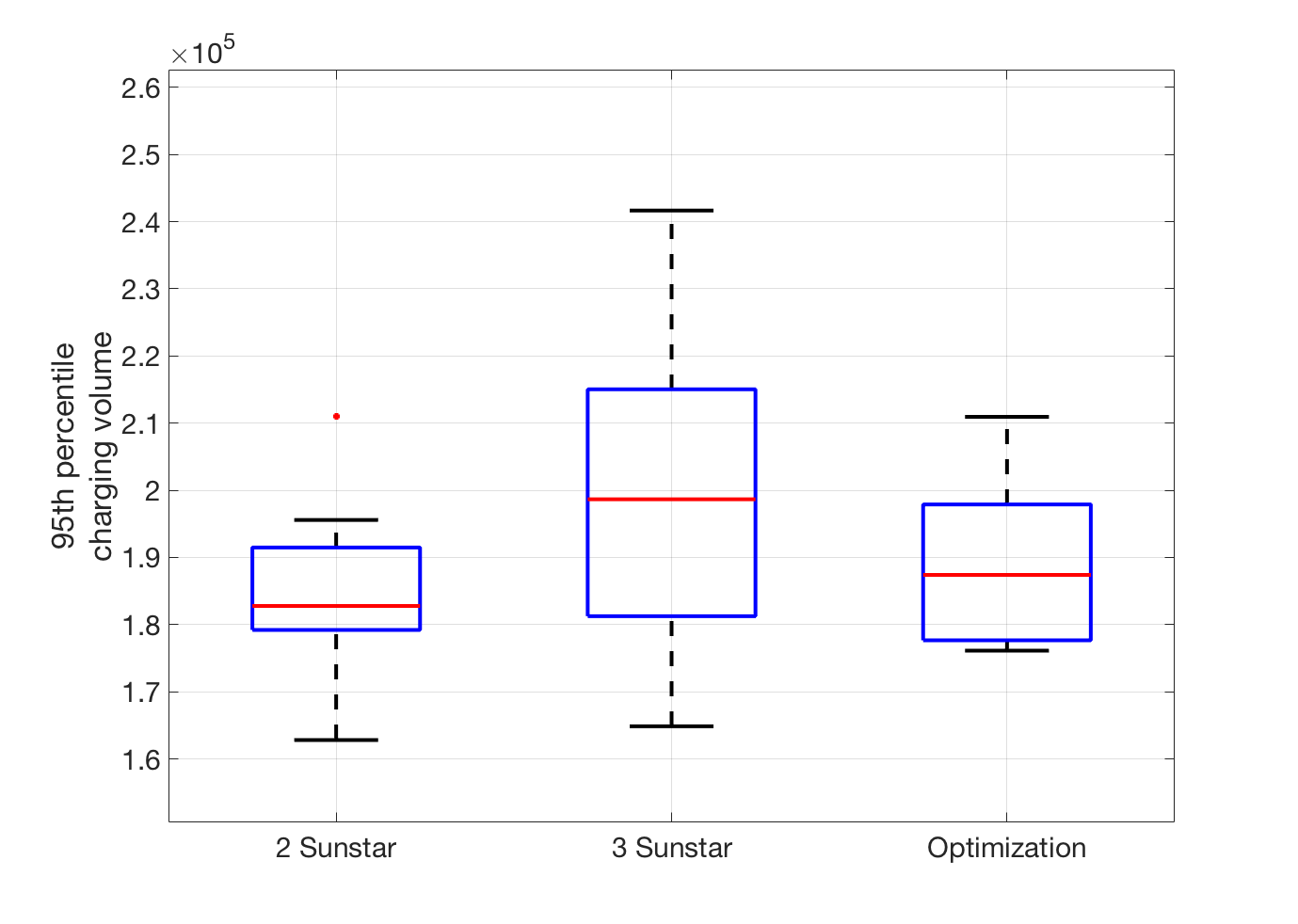}
     \caption{Peering costs comparison with ``smart'' server selection scheme that jointly optimizes for cost and performance.\label{fig:cost_opt}}  
\end{figure}

We make two observations from the results of \fig{fig:cost_schedule}.
The first is a confirmation of the insight of
Section~\ref{sec:why_cost}, namely, \emph{cost increases with the
  number of servers.} This is seen in the figure for both the min-RTT
and the \Sys schedulers, which display cost increases as the number of
servers goes from $2$ to $3$, even if the magnitude of the increase is
slightly less for \Sys.  The latter leads to our second observation,
namely, \emph{\Sys's rate variation minimization strategy is
  successful in mitigating cost increases.}  Specifically, \Sys
$2$~servers configuration outperforms not only a $2$~servers solution
using the Min-RTT scheduler, but also the baseline single server
configuration. And \Sys $3$~servers configuration has a cost
comparable to the single server baseline, and significantly lower than
its Min-RTT counterpart.  This offers empirical validation of the
simple analysis of Section~\ref{sec:why_cost} and the guidelines it
inspired.  In other words, \Sys succeeds in improving video
performance without affecting provider's cost.



\subsection{Server Selection Impact}

This section focuses on exploring whether a server selection strategy
that jointly optimizes for cost and performance can help further
reduce peering costs.  Since to the best of our knowledge no prior MuMS
server selection design exists that explicitly targets minimizing
provider cost, we
explore whether a server selection algorithm 
that jointly optimizes for cost {\em{and}} performance can help \Sys
further reduce its cost.  We formulate the server assignment problem
as a constrained optimization\footnote{Note that unlike the
  optimization of the \Sys scheduler, this optimization is required
  only once when a new client starts.} (see
Appendix~\ref{appendix:game} for details) that seeks to greedily
minimize increases in cost when assigning servers to new clients,
while meeting client rate constraints.

The results from experiments combining this optimization with the \Sys
scheduler are shown in
\fig{fig:cost_opt}, which compares the cost of \Sys for $2$ and $3$
servers under the previous round-robin assignment policy, to its cost
using the results of the optimization of Appendix~\ref{appendix:game}.
The figure illustrates that optimizing the server assignment did not
yield a meaningful reduction in cost, with the outcome falling in
between the results for the $2$~servers and $3$~servers scenarios.
This is not unexpected since the \emph{number} of servers that the
optimization assigns to a given client is not fixed.  In particular,
the optimization may assign any number of servers to a new client (up
to the maximum number available) when warranted by performance.
Because of space limits, we do not report the performance of the
combination of \Sys and our server assignment optimization, but it did
not offer statistically significant improvements in performance.  In
other words, optimizing server assignment did not help the \Sys client
achieve either better performance or lower cost, when compared to a
simple round-robin assignment policy\footnote{A separate experiment
  using a server selection strategy that picks the $k$ closest servers
  (lowest RTT), yielded a similar outcome.}.  The latter is partly due
to the \Sys scheduler design, as its goal of keeping rates low (as per
the results of Theorem~\ref{theo:cost}) realizes much of the available
gains in cost reductions.

This being said, we acknowledge limitations in the optimization of
Appendix~\ref{appendix:game}.  In particular, its reliance on a greedy
approach to estimate the impact of a new server assignment on the
$95^{\mbox{th}}$ percentile cost can clearly be improved, albeit at
the cost of significant added complexity.  A solution that better
predicts the impact of different assignments on costs may, therefore,
further improve on our formulation.

\Section{Conclusion}
\label{sec:concl}

The paper presented the design and implementation of \Sys, a MuMS
client aimed at improving video quality \emph{without} impacting
providers' peering costs.  \Sys relies on insight developed from
simple models for evaluating the effect of both multiple paths and
download rates on peering costs.  Those models helped illustrate why
both multiple paths and aggressive download rates could negatively
affect providers' costs.  This led to a design based on an
optimization framework that guarantees clients a target rate, while
minimizing rate variations.  Experiments on Emulab demonstrated \Sys's
ability to successfully meet its goals.

\appendix

\subsection{Mechanical Turk Experiment}
\label{appendix:turk}




\begin{figure*}[thb!]
\centering
\begin{subfigure}{.33\textwidth}
  \centering
   \includegraphics[width=.9\linewidth]{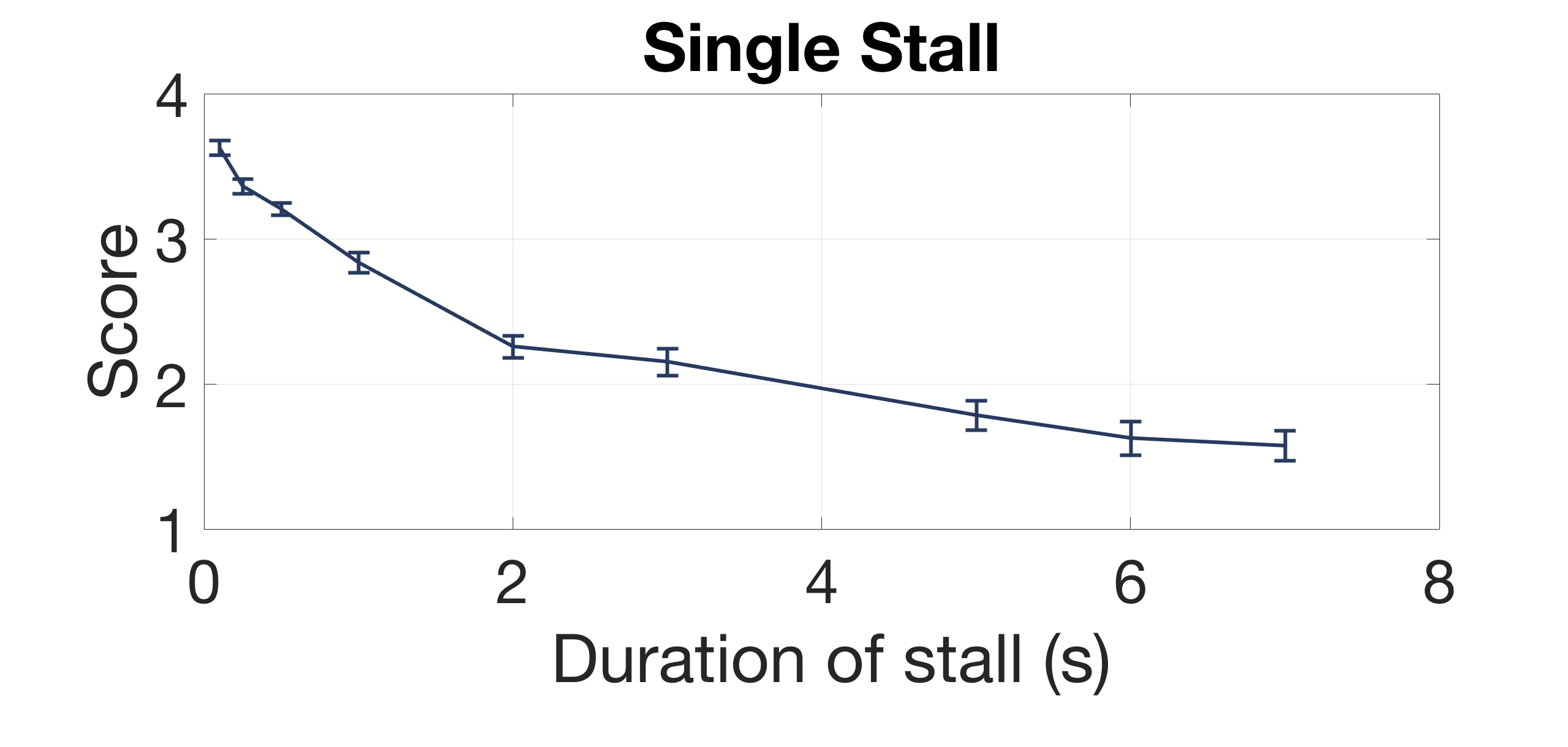}
     \caption{Single stall.}
\end{subfigure}%
\begin{subfigure}{.33\textwidth}
  \centering
\includegraphics[width=.9\linewidth]{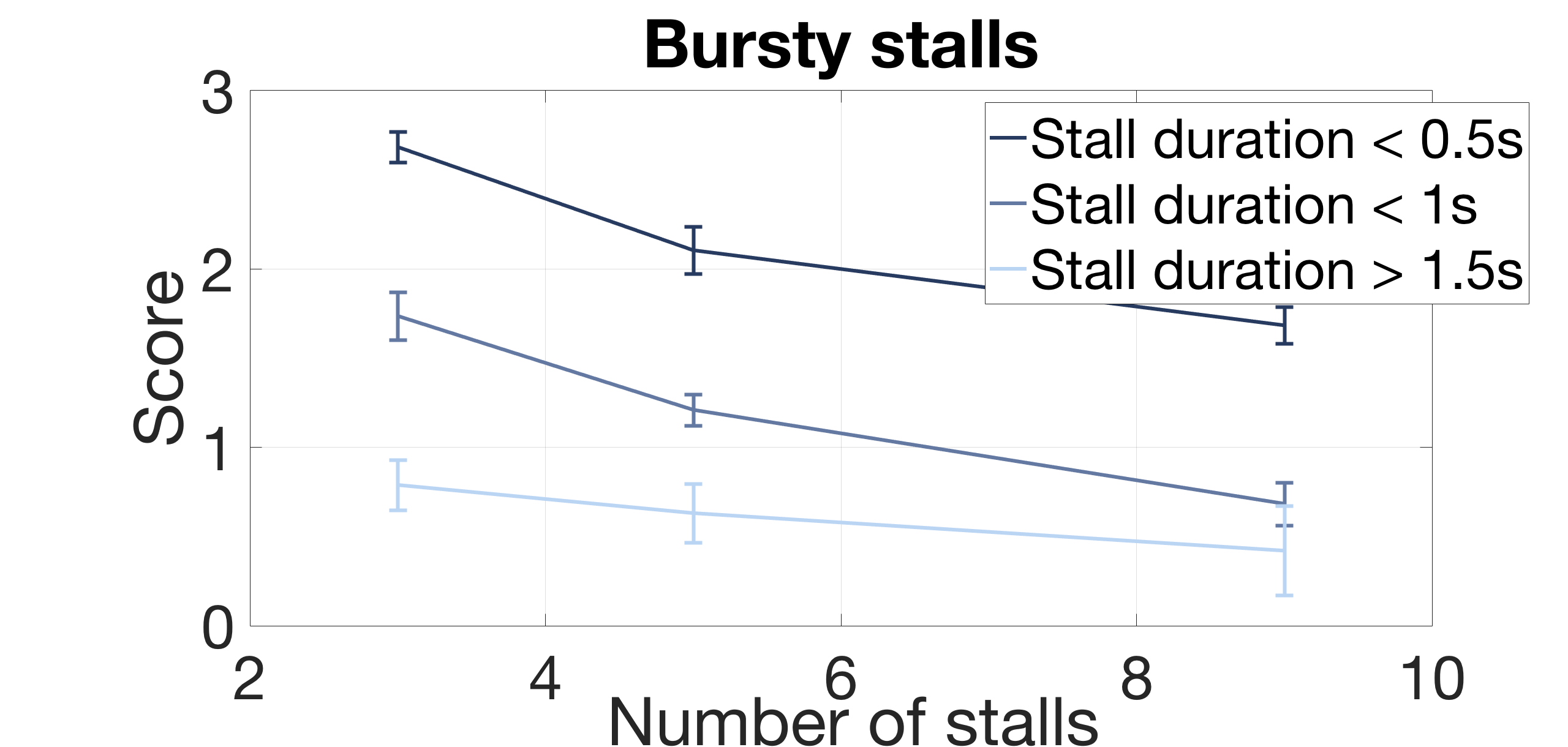}
     \caption{Multiple stalls, spaced out.}
\end{subfigure}
\begin{subfigure}{.33\textwidth}
  \centering
\includegraphics[width=.9\linewidth]{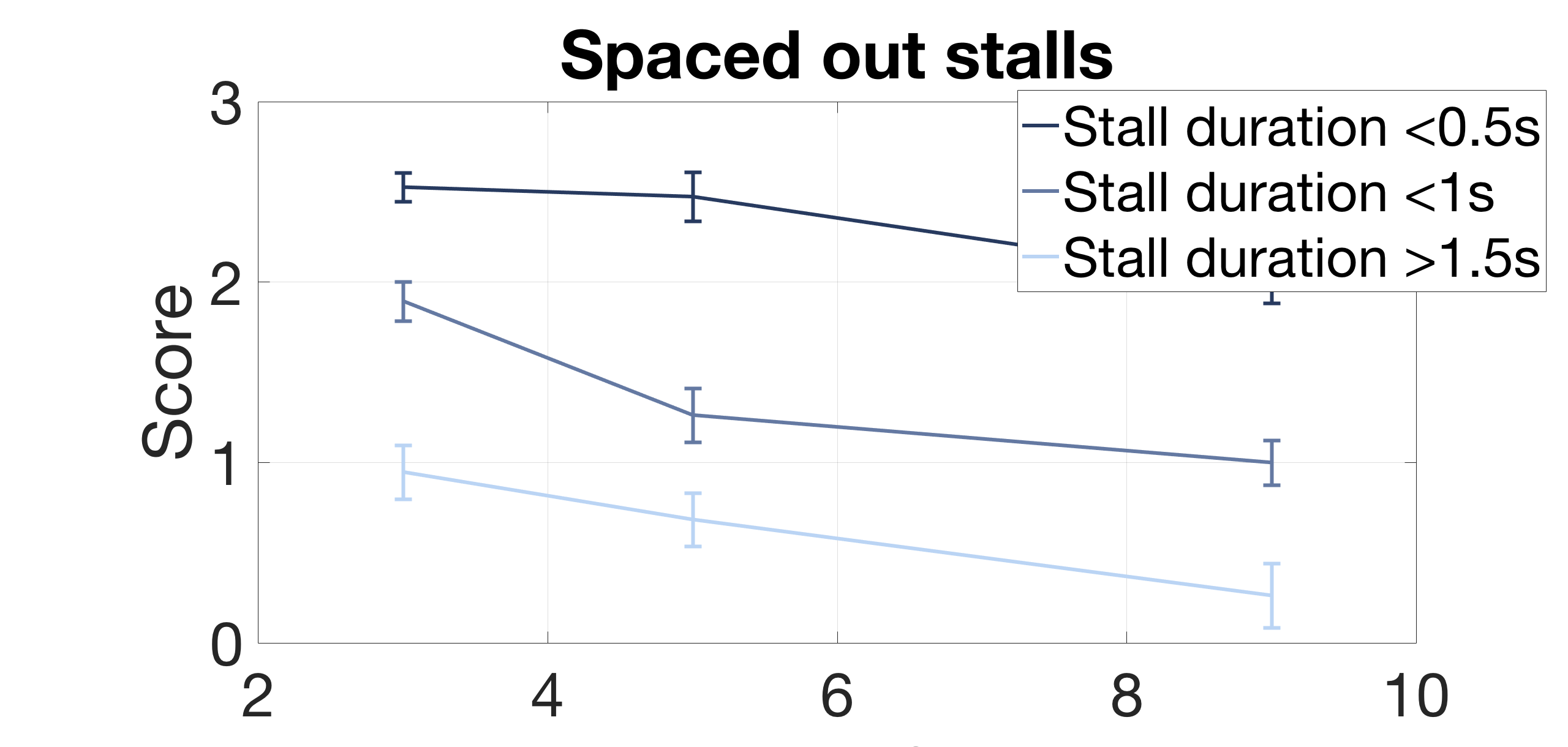}
     \caption{Multiple stalls, bursty.}
\end{subfigure}
\caption{Qualitative (Mechanical Turk) analysis of video
  QoE. \label{fig:mechturk}} 
\end{figure*}

The motivation was to further validate the QoE metrics we selected to evaluate video quality.  For that purpose, we
used a high quality (HD) documentary about Buckingham
Palace\footnote{\url{https://www.youtube.com/watch?v=jffKwoWjXtg}.}.
The video was divided into equal sized segments of $1$~min each, and
different types of impairments were introduced in those segments.
Due to of logistics constraints, only results for stalls are
available.  Specifically, we considered: 1) a single stall of variable
duration at a random location in the video; 2) multiple stalls of
small ($\approx 0.5$~sec), medium ($\approx 1$~sec), and long
($\approx 1.5$~sec) durations, evenly distributed in the segment; 3)
multiple stalls with the same distribution in duration, but now
closely spaced ($0.1$~sec) in a burst.  In 2) and 3) we varied the
number of stalls.  The quality of the video segments was evaluated on
a $0-5$ scale ($0$ being the lowest quality) by $20$ users recruited
through Amazon's Mechanical Turk market.  For calibration purposes,
users were first presented with an unimpaired video segment, and told
to assign it a rating of~$5$.
\vspace{-2mm}

Results of the study are presented in \fig{fig:mechturk}, which
confirms a strong correlation between stalls, both number and
duration, and video quality.  The limited size of the study is clearly
insufficient for broad conclusions, but it further confirms previous
QoE studies~\cite{QoE1,QoE2,QoE3} and the impact of stalls on video
quality.  Hence, fewer/shorter stalls do translate into higher video
quality.

\subsection{Server Selection Algorithm}
\label{appendix:game}

Server selection can be viewed as a Stackelberg game between the
clients and the provider, with the provider as the leader and clients
as the followers.  Once assigned servers, clients seek to maximize
their performance by scheduling requests to servers accordingly.
Given this behavior, the provider's goal is to assign servers so as to
minimize the
$95^{\mbox{th}}$ percentile cost.  This non-convex cost function
together with the online nature of the game make
computing the optimal assignment strategy hard.

We therefore propose a semi-online greedy optimization that is run
every $5$~mins and uses the current estimate of the $95^{\mbox{th}}$
percentile cost to assign client's to servers in a way that meets
their rate guarantees while minimizing cost. Specifically, the
optimization maintains an estimate of the number of client's expected
to arrive from each region (clients in a region have similar bandwidth
profiles and share the same connections to servers). Given these
estimates, it seeks to identify which assignment of servers for each
group of client results in the smallest increase in the current
$95^{\mbox{th}}$ percentile cost.  Furthermore, while the
optimization's goal is to minimize peering costs, it acknowledges that
this should not be at the expense of poor performance for the
clients. Thus, it also includes two additional constraints:
\vspace{-1mm}
\begin{align}
E\left[\sum_j \alpha_{ij} R_{ij} - T\right] \ge 0\\
E\left[\left(\sum_j\alpha_{ij} R_{ij}-T\right)^2\right] \le \gamma 
\label{sec:constraint2}
\end{align}
where $\alpha_{ij}$ is the number of requests client $i$ sends to
server $j$ and $R_{ij}$ is the rate in chunks per second from that
client's region to server $j$. 

Note that, reusing the notation of Section~\ref{sec:sched_optim},
\Eqref{sec:constraint2} can be written as $a_iTQa_i+b^Ta_i+c \le
\gamma$ where $Q$ is a matrix with $Q_{ii}=\Var(R_i)+\widehat{R}_i^2$ and
$Q_{ij}=\widehat{R}_i\widehat{R}_j$, $b$ is a vector where
$b_i=-2T\widehat{R}_i$, and $c=T^2$.

Take $\mathcal{F}_j$ as the current $95^{\mbox{th}}$ percentile cost
on peering link $j$, $\mathcal{L}_j$ the current load on peering link
$j$, and $m_i$ the expected number of clients arriving from region $i$
in the current decision period. We aim to solve the following
optimization:
\begin{align*}
&\min_{\alpha_{ij}} \quad \max_{B_j} (B_j+\mathcal{L}_j -\mathcal{F}_j,0)\\
&s.t. \quad B_j = \sum_i m_i\sum_j \alpha_{ij}\widehat{R}_{ij}\\
&\quad\sum_{j} \alpha_{ij}\widehat{R}_{ij} \ge T_i \quad \forall i\\
& \quad\alpha_{ij} \le w_{max}\\
&\quad\alpha_{i}^TQ_i\alpha_i+b^T\alpha_i+T^2 \le \gamma^2\\
& \quad \mathcal{B}_j+\mathcal{L}_j \le C_j
\end{align*}
where $w_{max}$ is the maximum window size allowed on the clients, and
the server selection algorithm assigns all servers with $\alpha_{ij} >
0$ to region $i$. It is straightforward to show that $Q$ in the above
equations is positive semidefinite. Therefore, the optimization is
convex and can be solved efficiently.




\bibliographystyle{abbrv}
\bibliography{biblio}
\end{document}